\definecolor{colorblindblue}{RGB}{17,112,170}
\definecolor{colorblindorange}{RGB}{200,82,0}
\newtheorem{theorem}{Theorem}[section]
\newtheorem{lemma}[theorem]{Lemma}
\newtheorem{proposition}[theorem]{Proposition}
\theoremstyle{definition}
\newtheorem{definition}[theorem]{Definition}
\renewcommand{\epsilon}{\varepsilon}
\providecommand\given{}
\newcommand\SetSymbol[1][]{%
  \nonscript\:#1\vert{}
  \allowbreak{}
  \nonscript\:
  \mathopen{}}
\DeclarePairedDelimiterX\Set[1]\{\}{%
  \renewcommand\given{\SetSymbol[]}
  #1
}
\newcommand{\nospaceperiod}{\makebox[0pt][l]{\,.}}
\newcommand{\R}{\mathbb{R}}
\newcommand{\N}{\mathbb{N}}
\newcommand{\norm}[1]{\left\lVert#1\right\rVert}
\DeclarePairedDelimiter\abs{\lvert}{\rvert}
\newcommand{\isomorphic}{\ensuremath{\cong}}
\newcommand{\opt}{\mathrm{opt}}
\newcommand{\deff}{\emph}
\DeclareMathOperator{\Cov}{Cov}
\DeclareMathOperator{\SCov}{SCov}
\DeclareMathOperator{\ins}{ins}
\DeclareMathOperator{\slow}{slow}
\DeclareMathOperator{\SB}{SB}
\newcommand{\Cont}{\mathcal{C}}
\DeclareMathOperator{\Cech}{\v{C}}
\DeclareMathOperator{\SCech}{S\v{C}}
\DeclareMathOperator{\SSub}{SSub}
\DeclareMathOperator{\Sub}{Sub}
\DeclareMathOperator{\Rips}{Rips}
\newcommand{\reach}{\Gamma}
\newcommand{\order}{\Delta}
\newcommand*\rel@kern[1]{\kern#1\dimexpr\macc@kerna}
\newcommand*\widebar[1]{%
  \begingroup
  \def\mathaccent##1##2{%
    \rel@kern{0.8}%
    \overline{\rel@kern{-0.8}\macc@nucleus\rel@kern{0.2}}%
    \rel@kern{-0.2}%
  }%
  \macc@depth\@ne
  \let\math@bgroup\@empty \let\math@egroup\macc@set@skewchar
  \mathsurround\z@ \frozen@everymath{\mathgroup\macc@group\relax}%
  \macc@set@skewchar\relax
  \let\mathaccentV\macc@nested@a
  \macc@nested@a\relax111{#1}%
  \endgroup
}
\renewcommand\subparagraph{\@startsection{subparagraph}{5}%
  \z@{.5\linespacing\@plus.7\linespacing}{-.5em}%
  {\normalfont\bfseries}}
\newcommand{\downset}[2]{#1 \downarrow #2}
\title{A Sparse Multicover Bifiltration of Linear Size}
\author{Ángel Javier Alonso}
\address{Institute of Geometry, Graz University of Technology, Austria}
\email{alonsohernandez@tugraz.at}
\begin{document}

\begin{abstract}
  The $k$-cover of a point cloud $X$ in $\R^{d}$ at radius $r$ is the set of all
  points within distance $r$ of at least $k$ points of $X$. By varying $r$
  and $k$ we obtain a two-parameter filtration known as the multicover
  bifiltration. This bifiltration has received attention recently due to being
  choice-free and robust to outliers. However, it is hard to compute: the
  smallest known equivalent simplicial bifiltration has $O(\abs{X}^{d+1})$
  simplices. We introduce a $(1+\epsilon)$-approximation of the
  multicover bifiltration of linear size $O(\abs{X})$, for fixed $d$ and
  $\epsilon$. The methods also apply to the subdivision Rips bifiltration on
  metric spaces of bounded doubling dimension, yielding analogous results.
\end{abstract}

\maketitle

\section{Introduction}

This paper aims to approximate the multicover bifiltration of a finite subset
$X$ of $\R^{d}$. The \deff{$k$-cover} $\Cov(r,k)$ for a scale $r\geq 0$ is
given by all points covered by at least $k$ balls of radius $r$ around the
points of $X$:
\begin{equation*}
  \Cov(r,k)\coloneqq \Set{p\in\R^{d} \given \norm{x - p} \leq r \text{ for at least $k$ points $x\in X$}}.
\end{equation*}
For any $r \leq r'$ and $k \geq k'$, we have that
$\Cov(r,k)\subset \Cov(r',k')$ and as such the $\Cov(r,k)$ assemble into a
\deff{bifiltration} known as the \deff{multicover bifiltration}, see~\cref{fig:multicover}.
\begin{figure}
  \centering
  \includegraphics[interpolate]{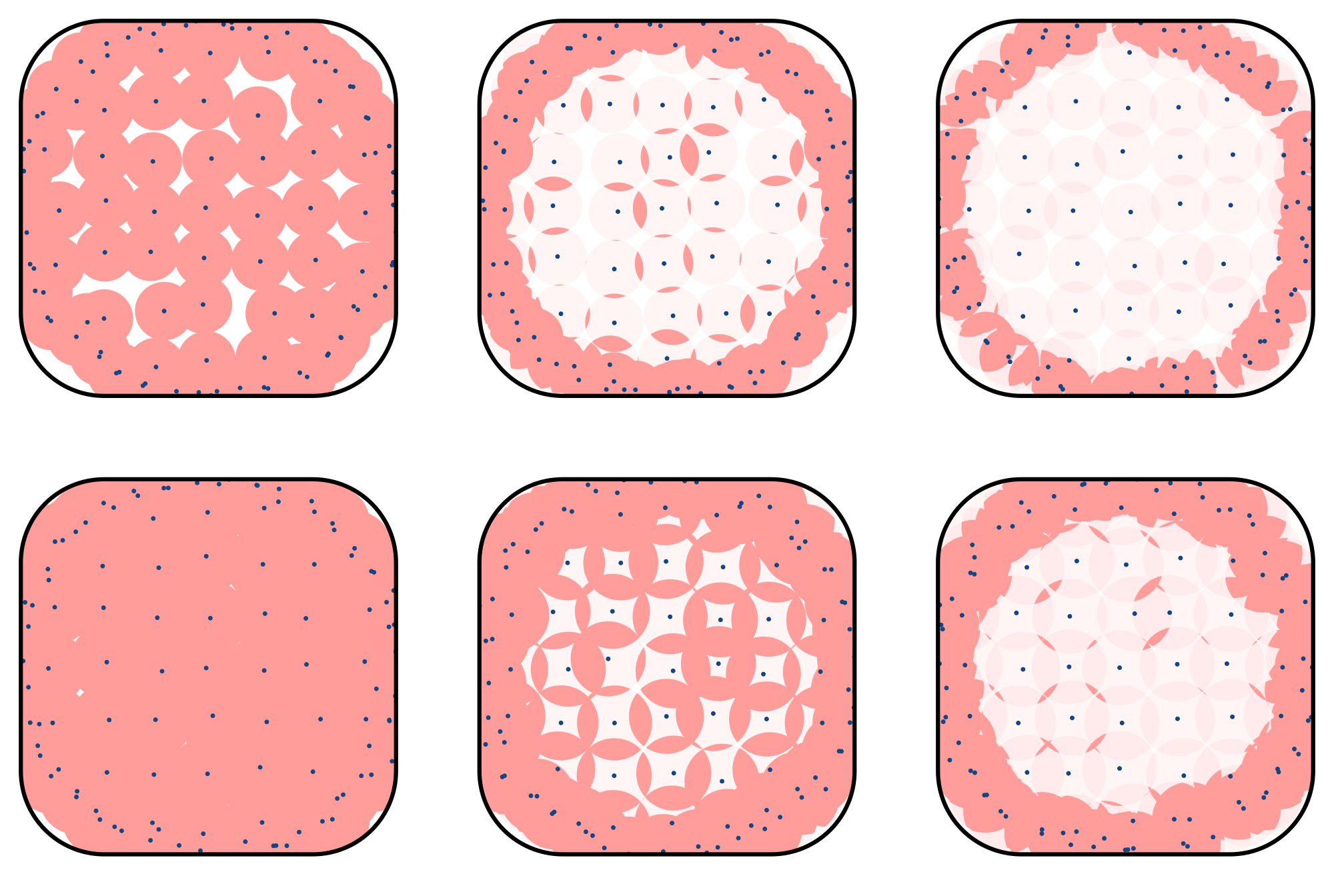}
  \caption{Illustration of the multicover bifiltration $\Cov(r,k)$ for various
    values of $r$ and $k$. The left column has $k = 1$ and is thus the union of
    balls around the points, the middle column shows those points covered by at
    least two balls, $k=2$, and the right column has $k = 3$: those points
    covered by at least three balls.
    The top row has a
    smaller scale parameter $r$ than the bottom row.
  }\label{fig:multicover}
\end{figure}

The multicover bifiltration offers a view into the topology of the point cloud
$X$ across multiple scales, in a way that is robust to
outliers~\cite{blumbergStability2ParameterPersistent2024}. However, computing
the multicover bifiltration remains a challenge. Current exact combinatorial
(simplicial or polyhedral)
methods~\cite{edelsbrunnerMultiCoverPersistenceEuclidean2021,corbetComputingMulticoverBifiltration2023}
are expensive, of size $O(\abs{X}^{d+1})$, and previous approximate
methods~\cite{buchetSparseHigherOrder2024,buchetSparseHigherOrder2023} are in general exponential and
only linear in $\abs{X}$ when taking $k$ to be at most a constant.

For general metric spaces, instead of subsets of $\R^{d}$, an analogue of the
multicover is the subdivision Rips
bifiltration~\cite{sheehyMulticoverNerveGeometric2012}. Like the multicover, it has received
attention recently, both in its
theoretical~\cite{blumbergStability2ParameterPersistent2024} and computational
aspects~\cite{lesnickNerveModelsSubdivision2024,lesnickSparseApproximationSubdivisionRips2024,hellmerDensitySensitiveBifiltered2024},
where it faces similar challenges. Previous approximation
schemes~\cite{lesnickInteractiveVisualization2D2015,blumbergStability2ParameterPersistent2024,lesnickNerveModelsSubdivision2024,lesnickSparseApproximationSubdivisionRips2024,hellmerDensitySensitiveBifiltered2024}
are of size polynomial (but not linear) in $\abs{X}$; see the related work section below.

In this paper, we define a simplicial bifiltration, the \deff{sparse subdivision
  bifiltration}~(\cref{def:sparse_subdivision}), that
$(1 + \epsilon)$-approximates the multicover and that is of linear size
$O(\abs{X})$, independently of $k$ and for fixed $\epsilon$ and $d$. It can be
computed in time $O(\abs{X} \log\Delta)$, where $\Delta$ is the \deff{spread} of
$X$, the ratio between the longest and shortest distances between points in $X$.
In addition, the methods extend to obtain analogous results for the subdivision
Rips bifiltration for metric spaces of bounded doubling
dimension, as we discuss in~\cref{sec:rips}.

The sparse subdivision bifiltration is homotopically equivalent to the \deff{sparse
  multicover bifiltration} $\SCov$ (\cref{def:sparse_multicover}), a
bifiltration of subsets of $\R^{d}$. It is via $\SCov$ that the sparse
subdivision bifiltration $(1+\epsilon)$-approximates the multicover, by which we
mean that
\begin{equation*}
  \Cov(r,k) \subset \SCov((1+3\epsilon)r,k) \text{ and } \SCov(r,k)\subset \Cov((1+\epsilon)r,k)
\end{equation*}
for all $r$ and $k$, as shown in~\cref{thm:approximation}. In other words, the sparse subdivision
bifiltration and the multicover are \deff{(multiplicatively)
  $(1+\epsilon)$-homotopy-interleaved}~\cite{blumbergUniversalityHomotopyInterleaving2023}.

This approximation is analogous to the one obtained by Cavanna, Jahanseir, and
Sheehy~\cite{cavannaGeometricPerspectiveSparse2015} for the case $k=1$, and our
techniques can be traced back to their work; see the related work section.
A key principle is that instead of balls of radius $r$, the
sparse multicover bifiltration is based on \deff{sparse balls}
(\cref{def:sparse_ball}), understood as balls with a lifetime of three phases:
in the first phase they grow normally (at scale $r$ they have radius $r$), and
then they are slowed down (they keep growing but at scale $r$ they have radius
less than $r$), before eventually disappearing. Cavanna, Jahanseir and
Sheehy~\cite{cavannaGeometricPerspectiveSparse2015} (there is a video~\cite{cavannaVisualizingSparseFiltrations2015}) do the same, with technical differences that include stopping balls from
growing, rather than growing slowly. The size bounds here are also an adaptation
of their methods, as well as the technique of lifting the construction one
dimension higher, as in the \emph{cones} of~\cref{sec:topological_equivalence}.

We guarantee that when a sparse ball disappears it is approximately covered by
another present sparse ball, and we keep track of which present sparse ball
covers which disappeared sparse ball via a \deff{covering
  map}~(\cref{def:covering_map}). A sparse ball is weighted by as many sparse
balls it approximately covers according to the covering map---the \deff{sparse
  $k$-cover} $\SCov(r,k)$ consists of those points covered by sparse balls
whose total weight is at least $k$.

\subsection{Motivation and related work}
For $k = 1$, the $1$-cover is the union of balls of radius $r$ around the points of
$X$. The union of balls is commonly used in reconstructing submanifolds from
samples~\cite{niyogiFindingHomologySubmanifolds2008,attaliVietorisRipsComplexes2013,attaliVietorisripsComplexesAlso2011}
and is a cornerstone of \textit{persistent homology} methods in topological data
analysis, being homotopy equivalent to the \v{C}ech complex and the alpha
complex~\cite{edelsbrunnerUnionBallsIts1995,edelsbrunnerComputationalTopologyIntroduction2010},
and related to the (Vietoris-)Rips complex. The union of balls assembles into a $1$-parameter filtration:
increasing the scale parameter $r$ yields inclusions
$\Cov(r, 1)\hookrightarrow \Cov(r', 1)$, for $r\leq r'$. Taking the homology
of each $\Cov(r,1)$, we obtain a \deff{persistence module}, an algebraic
descriptor of the topology of the filtration across the different scales. Such a
persistence module is stable to perturbations of the
points~\cite{cohen-steinerStabilityPersistenceDiagrams2007,cohen-steinerStabilityPersistenceDiagrams2005}, but is not robust
to outliers (already appreciated in the first column of~\cref{fig:multicover}) and insensitive to
differences in density in the point cloud.

There are multiple methods that address the lack of robustness to outliers and
changes in density within the $1$-parameter framework. These include density
estimation~\cite{phillipsGeometricInferenceKernel2015,chazalPersistenceBasedClusteringRiemannian2013,bobrowskiTopologicalConsistencyKernel2017},
distance to a
measure~\cite{chazalGeometricInferenceProbability2011,guibasWitnessedKDistance2013,guibasWitnessedKdistance2011,chazalRobustTopologicalInference2017,anaiDTMbasedFiltrations2020,anaiDTMBasedFiltrations2019,buchetEfficientRobustPersistent2016,buchetEfficientRobustPersistent2015},
and subsampling~\cite{blumbergRobustStatisticsHypothesis2014}. However, they all
depend on choosing a parameter. The problem is that it is not clear how to
choose such a parameter for all cases, and such a choice might focus on a
specific range of scales or densities. We refer to~\cite[Section
1.7]{blumbergStability2ParameterPersistent2024} for a complete overview of the
methods and their limitations.

It is then natural to consider constructions over two parameters, scale and
density. Examples include the density
bifiltration~\cite{carlssonTheoryMultidimensionalPersistence2009}, the degree
bifiltration~\cite{lesnickInteractiveVisualization2D2015}, and the
multicover, which is closely related to both the distance to a
measure~\cite{chazalGeometricInferenceProbability2011} and $k$-nearest
neighbors~\cite{sheehyMulticoverNerveGeometric2012}. The advantage of the
multicover is that it does not depend on any further choices (like choosing a
density estimation function) and that it is robust to outliers, as a consequence
of its strong stability
properties~\cite{blumbergStability2ParameterPersistent2024}.

However, one currently problematic aspect of the multicover is its computation.
Sheehy~\cite{sheehyMulticoverNerveGeometric2012} introduced an influential
simplicial model of the multicover called the \textit{subdivision (\v{C}ech)
  bifiltration}, based on the barycentric subdivision of the \v{C}ech
filtration. It has exponentially many vertices in the number of input points,
making its computation infeasible. A crucial ingredient in the theory is the
\textit{multicover nerve
  theorem}~\cite{sheehyMulticoverNerveGeometric2012,cavannaWhenWhyTopological2017,blumbergStability2ParameterPersistent2024},
which establishes the topological equivalence
(see~\cref{sec:topological_equivalence} for a precise definition) of the
subdivision \v{C}ech and multicover bifiltrations. Such a multicover nerve
theorem has its analogue for the sparse multicover: the \textit{sparse
  multicover nerve theorem}
(\cref{thm:sparse_multicover_nerve}).

Dually to a hyperplane arrangement in $\R^{d+1}$, Edelsbrunner and
Osang~\cite{edelsbrunnerMultiCoverPersistenceEuclidean2021,edelsbrunnerMulticoverPersistenceEuclidean2018conf}
define the~\textit{rhomboid tiling} and use it to compute, up to homotopy, slices (fixing $k$ and
varying $r$ or vice versa) of the multicover
bifiltration. Corbet, Kerber, Lesnick and
Osang~\cite{corbetComputingMulticoverBifiltration2023,corbetComputingMulticoverBifiltration2021conf}
build on it to define two bifiltrations topologically
equivalent to the multicover, of size $O(\abs{X}^{d+1})$.

Buchet, Dornelas, and Kerber~\cite{buchetSparseHigherOrder2024}
introduced an elegant $(1+\epsilon)$-approximation of the multicover whose
$m$-skeleton has size that is linear in $\abs{X}$ but that incurs an exponential
dependency on the maximum order $k$ we want to compute. A crucial difference
between our work and the Buchet-Dornelas-Kerber (BDK) sparsification is that BDK
work at the level of intersection of balls, while we work directly at the level of balls.
This starting point is what allows us to ultimately obtain a construction of
linear size, independently of $k$.

In addition, BDK work by freezing the lenses: at a certain scale they stop growing. It is not
clear (or is technically challenging, in the words of BDK) how to compute exactly the first scale at which freezing lenses first
intersect---a problem they sidestep by discretizing the scale parameter (at no
complexity cost). In contrast, by letting sparse balls grow slowly at a certain
rate, rather than freezing them completely, we can compute their first
intersection time exactly, as we explain in~\cref{sec:intersection_balls}.

\subparagraph{Sparsification} Our
methods are part of a line that can be traced to the seminal work of
Sheehy~\cite{sheehyLinearSizeApproximationsVietoris2013,sheehyLinearsizeApproximationsVietorisrips2012} to obtain a linear size
approximation of the Vietoris-Rips filtration. Subsequently, Cavanna, Jahanseir
and Sheehy~\cite{cavannaGeometricPerspectiveSparse2015} simplified and
generalized Sheehy's methods to obtain a linear size $(1+\epsilon)$-approximation of the
union-of-balls filtration $\Cov(r,1)$. They construct a filtration $S$ such that
  $S(r)\subset \Cov(r, 1)\subset S((1+\epsilon)r)$, resembling~\cref{thm:approximation} here. Moreover, the \textit{sparse balls} we use here are
  directly inspired by their methods: they use balls that grow normally, stop
  growing, and eventually disappear.

These methods, broadly referred to as \textit{sparsification}, have also been applied to the Delaunay
triangulation~\cite{sheehySparseDelaunayFiltration2021} and, as already
mentioned, the multicover itself~\cite{buchetSparseHigherOrder2024}. A
fundamental ingredient in many of these constructions is the greedy permutation~\cite{rosenkrantzAnalysisSeveralHeuristics1977,gonzalezClusteringMinimizeMaximum1985,dyerSimpleHeuristicPcentre1985},
or variants of it, which we also use in the form of~\deff{persistent nets}
(\cref{sec:persistence_nets}).

\subparagraph{General metric spaces} In general metric spaces, an analogue of
the subdivision \v{C}ech bifiltration of Sheehy is the \deff{subdivision Rips
  bifiltration}. Its computation faces similar challenges, and, in fact, no
subexponential size simplicial model of it
exists~\cite{lesnickNerveModelsSubdivision2024}. Thus, there has been recent
interest in approximations of subdivision Rips. Indeed, the \deff{degree Rips
  bifiltration}~\cite{lesnickInteractiveVisualization2D2015} can be shown to be
a $\sqrt{3}$-approximation~\cite{blumbergStability2ParameterPersistent2024},
whose $m$-skeleton has size $O(\abs{X}^{m+2})$, and has been
implemented~\cite{lesnickInteractiveVisualization2D2015,rolleStableConsistentDensityBased2024,scoccolaPersistablePersistentStable2023}.
Furthermore, it was recently shown that subdivision Rips admits
$\sqrt{2}$-approximations whose $k$-skeleta have the same size as degree
Rips~\cite{hellmerDensitySensitiveBifiltered2024,lesnickNerveModelsSubdivision2024}.
Most recently, Lesnick and
McCabe~\cite{lesnickNerveModelsSubdivision2024,lesnickSparseApproximationSubdivisionRips2024},
in the more particular case of metric spaces of bounded doubling dimension
(which include Euclidean spaces), give a $(1+\epsilon)$-approximation of
subdivision Rips whose $m$-skeleton has $O(\abs{X}^{m+2})$ simplices, for fixed
$\epsilon$ and dimension. Our methods also apply to the subdivision Rips
setting, yielding analogous results, as discussed in~\cref{sec:rips}.

\subparagraph{Miniball} In~\cref{sec:intersection_balls}, we compute the first
scale at which a set of sparse balls have non-empty intersection (in Euclidean space), which is required to
compute the sparse subdivision bifiltration. If instead of sparse balls we would
be using usual balls, such a scale would be the radius of the minimum enclosing
ball of the centers of the balls---the \textit{miniball problem}, first stated
by Sylvester in 1857~\cite{sylvesterQuestionGeometrySituation1857}. The miniball
problem can be solved in randomized linear time in the number of centers using
Welzl's algorithm~\cite{welzlSmallestEnclosingDisks1991}. In fact, the miniball
is an example of an \textit{LP-type} problem, as later defined by Matou\v{s}ek, Sharir and
Welzl~\cite{matousekSubexponentialBoundLinear1996}, and Welzl's algorithm for
the miniball can be generalized to the Matou\v{s}ek-Sharir-Welzl (MSW) algorithm
for LP-type problems. Fischer and
G{\"a}rtner~\cite{fischerSmallestEnclosingBall2004} solve the problem of computing
the minimum enclosing ball of balls---generalizing the miniball---via the
MSW-algorithm in a way that is practical and efficient, as in the implementation
in the CGAL library~\cite{cgal:fghhs-bv-24a}. We frame the problem of computing the first scale of
intersection of sparse balls as an LP-type problem and solve it as an extension
of Fischer and G{\"a}rtner's methods. In fact, this intersection problem can be
written as the \textit{smallest enclosing ball for a point set with strictly
  convex level sets}~\cite{zurcherSmallestEnclosingBall2007}, which is also solved as an
extension of Fischer and G{\"a}rtner's approach.

The miniball and similar problems have also been stated and solved
through the geometric optimization point of view,
see~\cite{dearingMinimumCoveringEuclidean2023,cawoodWeightedEuclideanOnecenter2024}
and references therein.

\subsection{Acknowledgements} The author would like to thank his advisor Michael
Kerber for helpful comments. He would also like to deeply thank an anonymous
reviewer for many comments that helped to substantially improve the exposition.
Part of the writing was carried out while the author was visiting Yasuaki Hiraoka's
group at the Kyoto University Institute for Advanced Study. This research was funded in
whole, or in part, by the Austrian Science Fund (FWF) 10.55776/P33765.

\section{A sparse multicover bifiltration}

In this section, we introduce the \deff{sparse multicover bifiltration}. As
already noted in the introduction, instead of using balls of radius $r$, the
sparse multicover involves balls with a lifetime of three phases: they start as
usual balls of radius $r$ at scale $r$, at some point they start to \textit{grow
  slowly}, that is, they have radii smaller than $r$ at scale $r$, and
eventually they disappear. We call this version of balls \textit{sparse balls}.
For their definition and the subsequent proofs, we use the notion of \emph{persistent
nets}, which we define first.

\subsection{Persistent nets}\label{sec:persistence_nets}

\deff{Persistent nets}, defined below, are a reinterpretation of
farthest point sampling, or greedy permutations, as known in the sparsification
literature~\cite{cavannaGeometricPerspectiveSparse2015,sheehySparseDelaunayFiltration2021,buchetSparseHigherOrder2024},
and as such are guaranteed to exist by Gonzalez's
algorithm~\cite{gonzalezClusteringMinimizeMaximum1985}. We will revisit these concepts in~\cref{sec:computation}.

Let $(X, \partial)$ be a finite metric space. We say that a subset
$S \subset X$ is an \deff{$r$-net} for a radius $r\geq 0$, if it is
\begin{enumerate}
  \item an $r$-covering: for every $x\in X\setminus S$ there is an $a\in S$ with
        $\partial(x, a) < r$, and
  \item an $r$-packing: for every two $a,b\in S$ we have $\partial(a, b) \geq r$.
\end{enumerate}
The condition of $r$-covering is equivalent to maximality, in the following
sense, which is why we take a strict inequality in the definition of
$r$-covering.
\begin{lemma}
  An $r$-packing $S\subset X$ is an $r$-covering if and only if it is maximal:
  there is no other $r$-packing $S'\subset X$ with $S\subset S'$.
\end{lemma}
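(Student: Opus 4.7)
The plan is a standard double implication, both directions being short and direct from the definitions.

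For the forward direction ($r$-covering implies maximal), I would argue contrapositively: suppose $S$ is not maximal, so there is a strictly larger $r$-packing $S' \supsetneq S$. Pick any $x \in S' \setminus S$. Since $S' \cup \{x\} = S'$ is still an $r$-packing, we have $\partial(x,a) \geq r$ for every $a \in S \subset S'$. This directly contradicts the $r$-covering property at the point $x \in X \setminus S$. Hence $S$ must be maximal.

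For the reverse direction (maximal implies $r$-covering), take any $x \in X \setminus S$. By maximality of $S$ as an $r$-packing, the set $S \cup \{x\}$ fails to be an $r$-packing. Since $S$ itself is an $r$-packing, the failure must involve the new point $x$: there exists some $a \in S$ with $\partial(x,a) < r$. This is exactly the $r$-covering condition for $x$.

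The main (very mild) subtlety is making sure the strict versus non-strict inequalities line up: the packing condition uses $\partial(a,b) \geq r$, while the covering condition uses $\partial(x,a) < r$. The negation of ``$\partial(x,a) \geq r$ for all $a \in S$'' is precisely ``$\partial(x,a) < r$ for some $a \in S$,'' so the strict inequality in the covering definition is exactly what is needed for the equivalence to go through cleanly. No further technical obstacle is anticipated; the proof should be three or four sentences in total.
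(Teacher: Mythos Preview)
Your proposal is correct and matches the paper's proof essentially line for line; the only cosmetic difference is that you phrase the forward direction as a contrapositive while the paper argues it directly, but the underlying argument (adding a point $x$ to $S$ and checking whether $S\cup\{x\}$ is an $r$-packing) is identical. Your remark about the strict/non-strict inequalities is exactly the point the paper alludes to just before the lemma.
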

\begin{proof}
  If $S$ is an $r$-packing and an $r$-covering, then
  $S\cup\Set{x}$ cannot be an $r$-packing for any $x\in X\setminus S$, because
  there is an $a\in S$ with $\partial(x, a) < r$---so $S$ is maximal.

  Suppose that $S\subset X$ is a maximal $r$-packing. For every
  $x\in X\setminus S$ there is an $a\in S$ with $\partial(x, a) < r$, because
  otherwise $S\cup\Set{x}$ would be an $r$-packing.
\end{proof}

In the spirit of persistence, a \deff{persistent net} $\mathcal{S}$ of $X$ is a collection of
$r$-nets $\mathcal{S}(r)\subset X$, one for each $r\geq 0$, such that for any
two $r'\geq r$ we have $\mathcal{S}(r')\subset \mathcal{S}(r)$.

The \deff{insertion radius} $\ins(x)\in\R\cup\Set{\infty}$ of a point $x\in X$, with
respect to a persistent net $\mathcal{S}$, is the supremum of those $r$ such
that $x\in\mathcal{S}(r)$. There is a large enough $r$ so
that $\mathcal{S}(r)$ is a single point (it has to be an $r$-packing), and so
there is only one point $x$ with $\ins(x) = \infty$.

\subsection{Sparse balls}

In what follows, we work over a fixed finite subset $X\subset\R^{d}$, with
$\R^{d}$ equipped with any norm. We also
fix an error parameter $\epsilon > 0$ and a persistent net $\mathcal{S}$ of $X$,
that we often drop from the notation. We define the \deff{slowing time}
$\slow(x)$ of a point $x\in X$ by
\begin{equation*}
  \slow(x)\coloneqq \frac{1+\epsilon}{\epsilon}\ins(x).
\end{equation*}

\begin{definition}\label{def:sparse_ball}
  We define the \deff{sparse ball} $\SB(x, r)$ of $x\in X$ at scale $r$ as:
  \begin{equation*}
    \SB(x, r) \coloneqq
    \begin{cases}
      B(x, \rho_{x}(r)) & r \leq (1+3\epsilon) \slow(x),\\
      \varnothing & r > (1+3\epsilon) \slow(x),
    \end{cases}
  \end{equation*}
  where $B(x, s)$ denotes the closed ball around $x$ of radius $s$, and
  $\rho_{x}\colon\R\to\R$ is a \deff{radius function}: any strictly increasing function
  such that
  \begin{itemize}
    \item on the interval $[0, \slow(x)]$, is $\rho_{x}(r) = r$, and
    \item on the interval $(\slow(x), (1+3\epsilon)\slow(x)]$, satisfies
          \begin{gather*}
            L_{x}(r) \leq \rho_{x}(r) \leq U_{x}(r), \text{ where}\\
            L_{x}(r) \coloneqq \frac{1}{1+3\epsilon} r + \frac{\epsilon}{1+\epsilon} \slow(x) \text{
              and
            } U_{x}(r) \coloneqq \frac{1}{3(1+\epsilon)} r +  \frac{2 + 3\epsilon}{3(1+\epsilon)}\slow(x).
          \end{gather*}
  \end{itemize}
\end{definition}
For example, the radius function of a point $x\in X$ can be
\begin{equation*}
  \rho_{x}(r) = \begin{cases}
    r, & r < \slow(x),\\
    U_{x}(r), & r \geq \slow(x),
  \end{cases}
\end{equation*}
but other options are useful for computation (\cref{sec:intersection_balls}).
In any case, the sparse ball of $x$ has radius $r$ up to scale $\slow(x)$, it
then starts to grow slowly with radius at most $U_{x}(r) < r$, until eventually
disappearing at scale $(1+3\epsilon)\slow(x)$. We say that the sparse ball
around $x\in X$ is \deff{slowed} at scale $r$ if $r > \slow(x)$.

\begin{figure}
  \centering
  \includegraphics{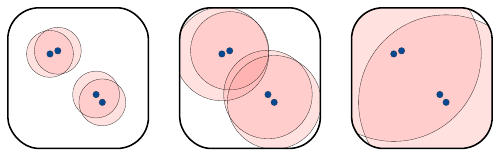}
  \caption{Sparse balls around four points in the plane. On the left square,
    they are not slowed yet. On the middle square, two of them are growing slowly,
    and are starting to get covered by the other two. On the right square, two
    of the sparse balls have disappeared.}\label{fig:sparse_balls}
\end{figure}

The functions $L_{x}(r)$ and $U_{x}(r)$ are defined in this way for technical reasons related
to properties of the sparse multicover that we will prove shortly. For now,
let us say that whenever a sparse ball disappears (that is, at scale $(1+3\epsilon)\slow(x)$),
it is covered by another non-slowed sparse ball:
\begin{lemma}
  Let $x\in X$ be a point and write $\gamma\coloneqq (1+3\epsilon)\slow(x)$.
  There exists a point $y\in X$ such that $\slow(y) \geq \gamma$ and
    $\SB(x, \gamma) \subset \SB(y, \gamma)$.
\end{lemma}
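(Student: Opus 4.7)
The plan is to produce the desired $y$ as a point of the persistent net $\mathcal{S}$ at a well-chosen scale, and then verify the inclusion by a triangle-inequality argument that depends on the carefully tuned upper bound $U_x$ from \cref{def:sparse_ball}.

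First I would translate $\slow(y) \geq \gamma$ into the equivalent condition $\ins(y) \geq (1+3\epsilon)\ins(x)$, using $\slow(z) = \frac{1+\epsilon}{\epsilon}\ins(z)$. Set $r_0 \coloneqq (1+3\epsilon)\ins(x)$. Since $\epsilon > 0$, we have $r_0 > \ins(x)$, so by definition of the insertion radius $x \notin \mathcal{S}(r_0)$. The $r_0$-covering property of $\mathcal{S}(r_0)$ then supplies some $y \in \mathcal{S}(r_0)$ with $\|x - y\| < r_0$, and $y \in \mathcal{S}(r_0)$ forces $\ins(y) \geq r_0$, giving $\slow(y) \geq \gamma$ as required.

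Next I would observe that, because $\slow(y) \geq \gamma$, the point $y$ has not yet started to be slowed at scale $\gamma$, so $\rho_y(\gamma) = \gamma$ and $\SB(y, \gamma) = B(y, \gamma)$. On the other hand, at scale $\gamma = (1+3\epsilon)\slow(x)$ the ball around $x$ is at the very end of its slowed phase, so $\rho_x(\gamma) \leq U_x(\gamma)$. A direct substitution gives
\begin{equation*}
  U_x(\gamma) = \frac{1+3\epsilon}{3(1+\epsilon)}\slow(x) + \frac{2+3\epsilon}{3(1+\epsilon)}\slow(x) = \frac{1+2\epsilon}{1+\epsilon}\slow(x).
\end{equation*}

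The remaining step, and the only real calculation, is to check that $\|x-y\| + \rho_x(\gamma) \leq \gamma$, from which the inclusion $\SB(x,\gamma) \subset \SB(y,\gamma)$ follows by the triangle inequality. Using $\|x - y\| < r_0 = (1+3\epsilon)\ins(x) = \frac{\epsilon}{1+\epsilon}\gamma$ together with the bound on $U_x(\gamma)$ above, one obtains
\begin{equation*}
  \|x-y\| + \rho_x(\gamma) < \frac{\epsilon}{1+\epsilon}\gamma + \frac{1+2\epsilon}{(1+\epsilon)(1+3\epsilon)}\gamma = \frac{1 + 3\epsilon + 3\epsilon^2}{1 + 4\epsilon + 3\epsilon^2}\gamma < \gamma.
\end{equation*}
The main (mild) obstacle is precisely this numerical check: the constants $\frac{1}{3(1+\epsilon)}$ and $\frac{2+3\epsilon}{3(1+\epsilon)}$ appearing in $U_x$ look arbitrary at first glance, and the point of the computation is to see that they are chosen exactly so that the two error contributions—how far $y$ can be from $x$ and how large the slowed radius of $x$ can be at its death scale—add up to strictly less than $\gamma$.
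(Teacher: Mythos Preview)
Your proof is correct and follows essentially the same approach as the paper's: pick $y$ from the net $\mathcal{S}$ at scale $r_0 = (1+3\epsilon)\ins(x) = \frac{\epsilon}{1+\epsilon}\gamma$, use the covering property to bound $\|x-y\|$, bound $\rho_x(\gamma)$ by $U_x(\gamma) = \frac{1+2\epsilon}{1+\epsilon}\slow(x)$, and finish with the triangle inequality. The only cosmetic difference is that the paper writes the same inequality as $\frac{\epsilon}{1+\epsilon}\gamma + \frac{1+2\epsilon}{1+\epsilon}\slow(x) < \gamma$ (and also records that $L_x(\gamma) = U_x(\gamma)$ at the death scale), whereas you express everything as a fraction of $\gamma$.
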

\begin{proof}
  The sparse ball of $x$ has radius at most
  $L_{x}(\gamma) = U_{x}(\gamma) = \frac{1+2\epsilon}{1+\epsilon} \slow(x)$.
At scale $\gamma$, the centers of the non-slowed sparse balls are precisely the
$\frac{\epsilon}{1+\epsilon}\gamma$-net
$\mathcal{S}(\frac{\epsilon}{1+\epsilon}\gamma)$. Thus, since
$\frac{\epsilon}{1+\epsilon}\gamma + \frac{1+2\epsilon}{1+\epsilon}\slow(x) < \gamma$,
there is a $y$ in $\mathcal{S}(\frac{\epsilon}{1+\epsilon}\gamma)$ such that the
sparse ball around $y$ covers the ball around $x$ of radius
$\frac{1+2\epsilon}{1+\epsilon}\slow(x)$,
\begin{equation*}
  \SB(x, \gamma) \subset B\left(x, \frac{1+2\epsilon}{1+\epsilon}\slow(x)\right) \subset B(y, \gamma) = \SB(y, \gamma).\qedhere
\end{equation*}
\end{proof}

\subsection{Covering map} One of the intuitive principles behind the sparse
multicover is that, at every scale $r\geq 0$, the non-empty sparse balls
\textit{approximately cover} the balls of radius $r$ around all the points of
$X$. Below, we formally define a map that takes each point $x$ to the non-empty
sparse ball that approximately covers the ball of $x$.

\begin{definition}\label{def:covering_sequence}
  We define the \deff{covering sequence} $(x_{0}, \dots, x_{m})$ of a point
  $x\in X$ as follows: we set $x_{0} = x$ and, recursively, given $x_{i-1}$ and
  if $\slow(x_{i-1}) < \infty$, we set $x_{i}$ to be the nearest neighbor
  (breaking ties arbitrarily) of $x$ among those points $y$ with
  $\slow(y)\geq (1+3\epsilon)\slow(x_{i-1})$. In other words, the nearest
  neighbor of $x$
  among those points whose sparse balls are non-slowed when the sparse ball of
  $x_{i-1}$ disappears.
\end{definition}

\begin{definition}\label{def:covering_map}
  For each scale $r\geq 0$, the \deff{covering map}
  $\mathcal{C}_{r}\colon X\to X$ takes $x$ to the first $x_{i}$ in its covering
  sequence whose sparse ball is non-empty at scale $r$.

  For a point $x\in X$ and $r\geq 0$, we say that
  $c_{r}(x)\coloneqq \abs{\mathcal{C}_{r}^{-1}(x)}$, the cardinality of its
  preimage under the covering map, is its \deff{covering weight}.
\end{definition}

We will use the following lemma multiple times. It describes how the covering
map inherits covering and packing properties from the persistent net.
\begin{lemma}\label{lem:covering_properties}
  The covering map has the following properties, for each $r\geq 0$:
  \begin{enumerate}
    \item The image $\mathcal{C}_{r}(X)$ is the set of points whose sparse
          balls at scale $r$ are non-empty,
    \item the points in $\mathcal{C}_{r}(X)$ are
          $\frac{\epsilon}{(1 + \epsilon)(1 + 3\epsilon)}r$-packed, and
    \item for each $x\in X$ and writing $y\coloneqq \Cont_{r}(x)$,
          $\norm{x - y} \leq \frac{\epsilon}{1 + \epsilon}\min(r, \slow(y))$.
  \end{enumerate}
\end{lemma}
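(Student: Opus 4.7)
The plan is to reduce all three statements to properties of a single net from the persistent net system $\mathcal{S}$. Set $s_r \coloneqq \frac{\epsilon}{(1+\epsilon)(1+3\epsilon)} r$. The first step is the observation that $\SB(z,r) \neq \varnothing$ iff $r \leq (1+3\epsilon)\slow(z)$ iff (using $\slow = \tfrac{1+\epsilon}{\epsilon}\ins$) $\ins(z) \geq s_r$ iff $z \in \mathcal{S}(s_r)$, so the set of centers of non-empty sparse balls at scale $r$ is exactly the net $\mathcal{S}(s_r)$.

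For claim (1), I would prove $\Cont_r(X) = \mathcal{S}(s_r)$ by double inclusion. By the definition of the covering map, each value $\Cont_r(x)$ lies in some covering sequence and has non-empty sparse ball at scale $r$, hence sits inside $\mathcal{S}(s_r)$. Conversely, any $y \in \mathcal{S}(s_r)$ satisfies $\SB(y,r) \neq \varnothing$, so its covering sequence stops at $y_0 = y$ and $\Cont_r(y) = y$. Claim (2) is then immediate since $\mathcal{S}(s_r)$ is an $s_r$-packing by the definition of a net.

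Claim (3) is where the bulk of the work lies, but it is largely bookkeeping. I would first rewrite the search set appearing in \cref{def:covering_sequence}: the condition $\slow(y) \geq (1+3\epsilon)\slow(x_{i-1})$ is equivalent to $\ins(y) \geq (1+3\epsilon)\ins(x_{i-1})$, so $x_i$ is the nearest neighbor of $x$ in the net $\mathcal{S}((1+3\epsilon)\ins(x_{i-1}))$. If $y = \Cont_r(x) = x_0$, the bound is trivial; otherwise $y = x_i$ for some $i \geq 1$, and minimality of $i$ forces $\slow(x_{i-1}) < r/(1+3\epsilon)$. The covering property of the net (together with the degenerate case $x_i = x$ when $x$ already belongs to the search net) yields
\[
  \norm{x - y} \leq (1+3\epsilon)\ins(x_{i-1}) = \frac{\epsilon(1+3\epsilon)}{1+\epsilon}\slow(x_{i-1}).
\]
From this, substituting the upper bound $\slow(x_{i-1}) < r/(1+3\epsilon)$ gives $\norm{x-y} < \frac{\epsilon}{1+\epsilon} r$, while the inclusion $x_i \in \mathcal{S}((1+3\epsilon)\ins(x_{i-1}))$ gives $\slow(y) \geq (1+3\epsilon)\slow(x_{i-1})$, from which $\norm{x-y} \leq \frac{\epsilon}{1+\epsilon}\slow(y)$ follows immediately.

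I do not anticipate a deep obstacle: the heart of the argument is the identification of the non-empty-sparse-ball set with the net $\mathcal{S}(s_r)$. The only care needed is to keep the conversions between $\ins$, $\slow$, and the various net radii consistent, and to remember that the recursive definition of $x_i$ measures distance from $x$ (not from $x_{i-1}$), which is precisely what lets the covering property be applied at the right scale.
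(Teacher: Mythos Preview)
Your proposal is correct and follows essentially the same approach as the paper: both arguments identify the set of non-empty sparse balls at scale $r$ with a net $\mathcal{S}(s_r)$ (the paper leaves this implicit), derive the packing from the net property, and for (3) bound $\norm{x-y}$ via the covering property of the net at scale $(1+3\epsilon)\ins(x_{i-1}) = \frac{\epsilon}{1+\epsilon}\gamma$ applied to the predecessor $x_{i-1}$ in the covering sequence. Your parenthetical about the degenerate case $x_i = x$ is in fact unnecessary (the slowing times strictly increase along the covering sequence, so $x_i \neq x$ for $i \geq 1$), but this does not affect correctness.
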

\begin{proof}
  The first property is by definition.

  For the second, by the previous property the sparse ball of
  $x\in\Cont_{r}(X)$ has not disappeared at scale $r$, and thus
  $r \leq (1+3\epsilon)\slow(x) = (1 + 3\epsilon)\frac{1+\epsilon}{\epsilon}\ins(x)$.
  This gives that $\frac{\epsilon}{(1+\epsilon)(1 + 3\epsilon)}r \leq \ins(x)$,
  which implies what we want by the definition of $\ins(x)$ and persistent net.

  For the third, if $x = y$ there is nothing to prove. Otherwise, let $z$ be
  the point directly before $y$ in the covering sequence of $x$, and write
  $\gamma \coloneqq (1+3\epsilon)\slow(z)$. By construction, $\gamma < r$ and
  $\gamma \leq \slow(y) = \frac{1 + \epsilon}{\epsilon} \ins(y)$. Thus, $y$ is in
  $\mathcal{S}(\frac{\epsilon}{1+\epsilon}\gamma)$, and $y$ is the nearest neighbor
  of $x$ in such subset. We conclude that
  $\norm{x-y} \leq \frac{\epsilon}{1+\epsilon}\gamma \leq \frac{\epsilon}{1+\epsilon}\min(r,\slow(y))$.
\end{proof}

\subsection{The sparse multicover}
We can write the $k$-cover $\Cov(r,k)$ for an $r\geq 0$ as
  \begin{equation*}
    \Cov(r,k) \coloneqq \Set*{p\in\R^{d} \given \text{exists $S\subset X$ with $p\in\bigcap_{x\in S} B(x, r)$ and $\abs{S} \geq k$}}.
  \end{equation*}
In parallel, the sparse multicover uses sparse balls instead of balls, and the covering
weight instead of the number of balls:
\begin{definition}\label{def:sparse_multicover}
  The \deff{sparse $k$-cover} $\SCov(r,k)$, for $r \geq 0$ and $k\in\N$, is
  the set of points $p\in\R^{d}$ covered by sparse balls at scale $r$
  whose sum of covering weights at scale $r$ is at least $k$,
  \begin{equation*}
    \SCov(r,k) \coloneqq \Set*{p\in\R^{d} \given \text{exists $S\subset X$ with $p\in\bigcap_{x\in S} \SB(x, r)$ and $\sum_{x\in S} c_{r}(x) \geq k$}}.
  \end{equation*}
\end{definition}

  The \deff{sparse multicover} is the bifiltration given by the sparse $k$-covers:

\begin{proposition}\label{prop:filtration}
  For $r\leq r'$ and $k\geq k'$, we have $\SCov(r,k) \subset \SCov(r',k')$.
\end{proposition}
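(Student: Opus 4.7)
The multiplicity direction is immediate from the definition of $\SCov$: any witness $S$ satisfying $\sum_{x \in S} c_r(x) \geq k \geq k'$ also witnesses $p \in \SCov(r, k')$. So the plan is to focus on the scale direction, $\SCov(r, k) \subset \SCov(r', k)$ for $r \leq r'$.

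Given a witness $S$ for $p \in \SCov(r, k)$, the idea is to build a witness $S'$ at scale $r'$ by transporting through the covering maps: set $T \coloneqq \Cont_r^{-1}(S)$ and $S' \coloneqq \Cont_{r'}(T)$. The weight comes out correctly by a cardinality count: since preimages under a function are pairwise disjoint, $|T| = \sum_{x \in S} c_r(x) \geq k$; and since the map $\Cont_{r'}\colon T \to S'$ lands in the disjoint union $\bigsqcup_{y \in S'}\Cont_{r'}^{-1}(y)$, one obtains $\sum_{y \in S'} c_{r'}(y) \geq |T| \geq k$. The remaining claim that $p \in \SB(y, r')$ for each $y \in S'$ reduces to the following key lemma: if $\Cont_r(z) = x$ and $r \leq r'$, then $\SB(x, r) \subset \SB(\Cont_{r'}(z), r')$.

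To prove the key lemma, I would write $z$'s covering sequence as $(z_0, z_1, \ldots)$ with $z_i = x$ and let $\gamma_j \coloneqq (1+3\epsilon)\slow(z_j)$. Since $\SB(z_j, r) = \varnothing$ for $j < i$ forces $\SB(z_j, r') = \varnothing$ (sparse balls do not reappear), one has $\Cont_{r'}(z) = z_l$ for some $l \geq i$. The case $l = i$ is handled by the strict monotonicity of $\rho_{z_i}$ on its support. For $l > i$, I would induct on $l - i$ via the single-step inclusion $\SB(z_j, \gamma_j) \subset \SB(z_{j+1}, \gamma_j)$. Using $\rho_{z_j}(\gamma_j) = \frac{1+2\epsilon}{1+\epsilon}\slow(z_j)$ (at $\gamma_j$ the bounds $L_{z_j}$ and $U_{z_j}$ coincide) and $\rho_{z_{j+1}}(\gamma_j) = \gamma_j$ (since $\gamma_j \leq \slow(z_{j+1})$ by construction of the covering sequence), the triangle inequality reduces this step to $\|z_j - z_{j+1}\| \leq \frac{\epsilon(2+3\epsilon)}{1+\epsilon}\slow(z_j)$. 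That in turn follows from $\|z_0 - z_k\| \leq \frac{\epsilon}{1+\epsilon}\gamma_{k-1}$ (each $z_k$ is the nearest neighbor of $z_0$ in the net $\mathcal{S}(\frac{\epsilon}{1+\epsilon}\gamma_{k-1})$) combined with $\|z_j - z_{j+1}\| \leq \|z_0 - z_j\| + \|z_0 - z_{j+1}\|$ and $\gamma_{j-1} \leq \gamma_j/(1+3\epsilon)$. Monotonicity of $\rho_{z_l}$ then gives $\SB(z_l, \gamma_{l-1}) \subset \SB(z_l, r')$ since $\gamma_{l-1} < r' \leq \gamma_l$, completing the induction.

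The main obstacle will be this inductive step: the disappearance lemma stated after \cref{def:sparse_ball} only supplies \emph{some} nearest-in-net replacement for $z_j$, whereas $z_{j+1}$ is defined as the nearest point to the original $z_0$ rather than to $z_j$. The triangle estimate above bridges the gap, but only just---the inequality is tight at the boundary, which is precisely what pins down the specific form of the bounds $L_x$ and $U_x$ in \cref{def:sparse_ball}.
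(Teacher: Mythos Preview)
Your proposal is correct and shares the paper's overall strategy: pull the witness back through $\Cont_r$ to a set $T$ of at least $k$ points, push forward through $\Cont_{r'}$, and verify that each resulting sparse ball still contains $p$. The weight bookkeeping is identical to the paper's.

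Where you diverge is in the proof of the key containment $\SB(x,r)\subset\SB(y,r')$ for $x=\Cont_r(a)$, $y=\Cont_{r'}(a)$. You chain single-step inclusions $\SB(z_j,\gamma_j)\subset\SB(z_{j+1},\gamma_j)$ along the full covering sequence from $z_i=x$ to $z_l=y$, controlling $\|z_j-z_{j+1}\|$ by two applications of the net-covering bound through the anchor $z_0=a$ together with $\gamma_{j-1}\leq\gamma_j/(1+3\epsilon)$. The paper instead jumps from $x$ to $y$ in a single triangle inequality routed through $a$,
\[
  \|y-p\|\leq\|y-a\|+\|a-x\|+\|x-p\|,
\]
bounds the first two terms by \cref{lem:covering_properties}(3) and the third by $\rho_x(r)\leq\frac{1+2\epsilon}{1+\epsilon}\slow(x)$, and then collapses the sum using $(1+3\epsilon)\slow(x)\leq\min(r',\slow(y))$ to obtain $\|y-p\|\leq\min(r',\slow(y))\leq\rho_y(r')$. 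This avoids the induction and never touches the intermediate $z_j$'s. Your route yields the sharper geometric statement that the sparse balls nest along the entire covering sequence, which is nice to know; the paper's one-shot estimate is shorter and isolates the role of \cref{lem:covering_properties} more cleanly. (One small remark: your inductive chain only evaluates $\rho_{z_j}$ at the endpoint $\gamma_j$, where $L_{z_j}$ and $U_{z_j}$ coincide, so it does not actually pin down the shape of $L_x$ and $U_x$ on the interior of the slowed interval; those bounds earn their keep in \cref{thm:approximation}.)
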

\begin{proof}
  Let $p\in\SCov(r,k)$ and let $S$ be a subset whose sparse balls at scale $r$
  cover $p$ with covering weight at least $k$. Take
  $A \coloneqq \Cont_{r}^{-1}(S)$ and note that there are at least $k$ points in
  $A$. For each $a\in A$, we claim that the sparse ball of $\Cont_{r'}(a)$ (the point that ``covers'' $a$ at scale $r'$) also contains $p$.
  Proving the claim finishes the proof, as the sum of covering weights of
  $\Cont_{r'}(A)$ is at least $k \geq k'$.

  We write $x\coloneqq \Cont_{r}(a)$ and $y\coloneqq\Cont_{r'}(a)$. The claim is
  immediate if $x = y$. Otherwise, note that necessarily $\ins(x) < \infty$,
  because the point with infinite insertion radius is unique. Also note that
  $x\not\in\Cont_{r'}(X)$ and, thus,
  $(1+3\epsilon)\slow(x) \leq \min(r', \slow(y))$. This fact, together with the triangle inequality
  and~\cref{lem:covering_properties}, gives
  \begin{equation*}
    \begin{split}
      \norm{y - p} &\leq \norm{y - a} + \norm{a - x} + \norm{x - p} \\
                   &\leq \frac{\epsilon}{1+\epsilon} \min(r', \slow(y)) + \frac{\epsilon}{1+\epsilon} \min(r, \slow(x)) + \rho_{x}(r).\\
                   &\leq \frac{\epsilon}{1+\epsilon} \min(r', \slow(y)) + \frac{\epsilon}{1+\epsilon} \slow(x) + \frac{1+2\epsilon}{1+\epsilon}\slow(x)\\
                   &\leq \frac{\epsilon}{1+\epsilon} \min(r', \slow(y)) + \frac{1 + 3\epsilon}{1+\epsilon} \slow(x)\\
                   &\leq \left(\frac{\epsilon}{1+\epsilon} + \frac{1}{1+\epsilon}\right) \min(r', \slow(y))\\
                   &= \min(r',\slow(y)) \leq \rho_{y}(r').\qedhere
    \end{split}
  \end{equation*}
\end{proof}

The sparse multicover approximates the multicover:
\begin{theorem}\label{thm:approximation}
  Writing
  $\delta\coloneqq\frac{1+2\epsilon}{1+\epsilon} < 1+\epsilon$, we have
  \begin{equation*}
    \Cov(r,k) \subset \SCov((1+3\epsilon)r,k) \text{ and } \SCov(r,k)\subset \Cov(\delta r,k).
  \end{equation*}
\end{theorem}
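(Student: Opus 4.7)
For both containments, the plan is to transport the witness through the covering map $\Cont$ and verify a triangle-inequality chain using~\cref{lem:covering_properties}. For $\Cov(r,k) \subset \SCov((1+3\epsilon)r,k)$, I would take a witness $S$ of size $\geq k$ for $p \in \Cov(r,k)$ and push it forward to $T \coloneqq \Cont_R(S)$, where $R \coloneqq (1+3\epsilon)r$. Since $S \subset \Cont_R^{-1}(T)$, the sum of covering weights $\sum_{y\in T} c_R(y)$ is at least $\abs{S}\geq k$, so it suffices to show $p\in\SB(y,R)$ for each $y = \Cont_R(x)\in T$. I would do this by bounding $\norm{y-p}\leq \norm{y-x}+\norm{x-p}$ using $\norm{x-p}\leq r$ and the covering bound $\norm{y-x}\leq\tfrac{\epsilon}{1+\epsilon}\min(R,\slow(y))$, and comparing the result against $\rho_y(R)$. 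The non-emptiness of $\SB(y,R)$ forces $r\leq\slow(y)$; the subcase $R\leq\slow(y)$ uses $\rho_y(R)=R$ directly, while $R>\slow(y)$ uses the lower bound $\rho_y(R)\geq L_y(R)=r+\tfrac{\epsilon}{1+\epsilon}\slow(y)$.

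For $\SCov(r,k) \subset \Cov(\delta r,k)$, I would run the same idea in the opposite direction: given a witness $S$ for $p\in\SCov(r,k)$, take $T\coloneqq\Cont_r^{-1}(S)$, which has size at least $k$, and bound $\norm{x-p}\leq\norm{x-y}+\norm{y-p}$ for each $x\in T$ and $y\coloneqq\Cont_r(x)$ using $\norm{y-p}\leq\rho_y(r)$ and $\norm{x-y}\leq\tfrac{\epsilon}{1+\epsilon}\min(r,\slow(y))$. I would split again on $r\leq\slow(y)$, where $\rho_y(r)=r$ gives a total of exactly $\delta r$, versus $r>\slow(y)$, where the upper bound $\rho_y(r)\leq U_y(r)$ is what closes the inequality.

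The main obstacle I anticipate is the book-keeping across these four subcases. The constants defining $L_y$ and $U_y$ appear finely tuned: the first containment needs the \emph{lower} bound on $\rho_y$ while the second needs the \emph{upper} bound, and each subcase requires the correct value of $\min(\cdot,\slow(y))$ to be substituted in. Once the cases are isolated, however, each inequality is linear in $r$ and $\slow(y)$ and should collapse by comparing coefficients, with the slowed subcase of the first inclusion in fact saturating the $L_y$ bound exactly.
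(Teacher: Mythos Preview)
Your proposal is correct and essentially identical to the paper's own proof: both inclusions are handled by pushing the witness through the covering map $\Cont$, applying the triangle inequality with \cref{lem:covering_properties}, and splitting on whether the relevant sparse ball is slowed, using the $L_y$ bound for the first inclusion and the $U_y$ bound for the second. Even the observation that the slowed subcase of the first inclusion saturates $L_y$ exactly matches the paper.
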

\begin{proof}
  We start with the first inclusion. Let $p\in\Cov(r,k)$, which implies that
  there is a subset $S\subset X$ of $k$ points such that $\norm{a - p} \leq r$
  for all $a\in S$.

  Let $S' = \Cont_{(1+3\epsilon)r}(S)$ and note that the sum of covering weights
  $\sum_{x\in S'}c_{(1+3\epsilon)r}(x)$ is at least $k$. It is left to show that every sparse ball
  around points of $S'$ at scale $(1+3\epsilon)r$ contains $p$. Consider a point
  $a\in S$ and write $x\coloneqq \Cont_{(1+3\epsilon)r}(a)\in S'$. Note that
  necessarily $r \leq \slow(x)$, because otherwise the sparse ball of $x$ has
  disappeared by scale $(1+3\epsilon)r$. We have
  \begin{equation*}
    \norm{x - p} \leq \norm{x - a} + \norm{a - p} \leq \frac{\epsilon}{1+\epsilon}\min((1+3\epsilon)r, \slow(x)) + r.
  \end{equation*}
  Now we do a case distinction. If $(1+3\epsilon)r < \slow(x)$, we have
  \begin{equation*}
    \norm{x - p} \leq \frac{\epsilon}{1+\epsilon}(1+3\epsilon)r + r \leq (1+3\epsilon) r =  \rho_{x}((1+3\epsilon)r),
  \end{equation*}
  and otherwise $(1+3\epsilon)r \geq \slow(x)$, so we have
  \begin{equation*}
    \norm{x - p} \leq \frac{\epsilon}{1+\epsilon}\slow(x) + r = L_{x}((1+3\epsilon)r) \leq \rho_{x}((1+3\epsilon) r).
  \end{equation*}

  We now prove the second inclusion. Let $p\in \SCov(r,k)$. This means that
  there exists a subset $S\subset \mathcal{C}_{r}(X)$ whose sparse balls cover
  $p$ at scale $r$ with covering weight at least $k$. Consider
  $A \coloneqq \mathcal{C}_{r}^{-1}(S)$. Then, by definition, $\abs{A} \geq k$
  and it is left to show that for each $a\in A$ we have
  $\norm{a-p}\leq \delta r$. Indeed, writing $x\coloneqq\Cont_{r}(a)$,
  \begin{equation*}
    \norm{a - p} \leq \norm{a - x} + \norm{x - p} \leq \frac{\epsilon}{1+\epsilon}\min(r, \slow(x)) + \rho_{x}(r).
  \end{equation*}
  We do a case distinction again. If $r\leq\slow(x)$, then
  \begin{equation*}
    \norm{a-p}\leq \frac{\epsilon}{1+\epsilon}r + r = \frac{1+2\epsilon}{1+\epsilon}r.
  \end{equation*}
  Otherwise, if $r > \slow(x)$,
  \begin{equation*}
    \begin{split}
      \norm{a-p} &\leq \frac{\epsilon}{1+\epsilon}\slow(x) + U_{x}(r)\\
                   &= \frac{\epsilon}{1+\epsilon}\slow(x)
      + \frac{1}{3(1+\epsilon)}r
      + \frac{2 + 3\epsilon}{3(1+\epsilon)}\slow(x)
      \leq \frac{1+2\epsilon}{1+\epsilon}r.\qedhere
    \end{split}
  \end{equation*}
\end{proof}

\section{An equivalent simplicial bifiltration}

We look for a bifiltration of simplicial complexes that is homotopically
equivalent, as defined shortly, to the sparse multicover bifiltration. This will
be the \deff{sparse subdivision bifiltration}, which is an extension of the
subdivision \v{C}ech bifiltration of
Sheehy~\cite{sheehyMulticoverNerveGeometric2012}.

\subsection{Homotopically equivalent bifiltrations}\label{sec:topological_equivalence}
We now recall the fundamental definitions of the homotopy theory of filtrations, where
filtrations are viewed as diagrams, as is standard; see, e.g.,~\cite{blumbergStability2ParameterPersistent2024}. Let $F$ be a bifiltration of topological spaces
indexed, in our case, by radii $r\in [0, \infty)$ and order $k\in\N$; that is, a
collection of topological spaces $F_{r,k}$, one for each $r\geq 0$ and $k\in\N$,
together with a continuous map $F_{(r,k)\to (r',k')}\colon F_{r,k} \to F_{r',k'}$ for every $r\leq r'$ and
$k\geq k'$, such that $F_{(r,k)\to (r,k)}$ is the identity and the following diagram commutes
\begin{equation*}
  \begin{tikzcd}
    F_{r, k} \arrow[rr]\arrow[dr] & & F_{r',k'}\\
    & F_{r'',k''}. \arrow[ur] &
  \end{tikzcd}
\end{equation*}
A \deff{pointwise weak equivalence}
$F\xrightarrow{\isomorphic} G$ between two bifiltrations $F$ and $G$, is a
collection of weak homotopy equivalences $\alpha_{r,k}\colon F_{r,k}\to G_{r,k}$ such that the following diagram commutes for every
$r\leq r'$ and $k\geq k'$:
\begin{equation*}
  \begin{tikzcd}
    F_{r,k} \arrow[r]\dar[swap]{\alpha_{r,k}} & F_{r',k'}\dar{\alpha_{r',k'}}\\
    G_{r,k}  \arrow[r] & G_{r',k'}\nospaceperiod
  \end{tikzcd}
\end{equation*}
Pointwise weak equivalence is not an equivalence relation, but the following
is. Two bifiltrations $F$ and $G$ are \deff{weakly equivalent} if there
exists a zigzag of pointwise homotopy equivalences that connect them:
\[
  \begin{tikzcd}[ampersand replacement=\&,column sep=2ex,row sep=2ex]
    \& C^1\ar[swap]{dl}\ar{dr}  \&           \& \cdots\ar[swap]{dl}\ar[]{dr}  \&                \&   C^n\ar[swap]{dl}\ar[]{dr}  \\
    F \&                                                             \&  C^2 \&                                                                                                                       \& C^{n-1} \&                                                               \&G.
  \end{tikzcd}
\]

\subsection{The sparse subdivision bifiltration}
We now define the \deff{sparse subdivision bifiltration} and prove that it is
weakly equivalent to the multicover bifiltration. As already mentioned, the
sparse subdivision bifiltration is an extension of Sheehy's \deff{subdivision
  \v{C}ech bifiltration}~\cite{sheehyMulticoverNerveGeometric2012}, which we review
first.

\subparagraph{Subdivision \v{C}ech bifiltration}
We start with the following extension of the \v{C}ech complex:
\begin{definition}\label{def:cech_poset}
  For a given scale $r\in[0, \infty)$ and order $k\in\N$, the \deff{$k$-\v{C}ech
  poset} $\Cech(r,k)$ is
  \begin{equation*}\label{eq:cech_poset}
    \Cech(r,k) \coloneqq \Set*{\sigma\subset X \given \bigcap_{x\in\sigma} B(x, r) \neq \varnothing \text{ and } \abs{\sigma} \geq k}
  \end{equation*}
  with the order given by inclusion.
\end{definition}
The $1$-\v{C}ech poset is (the face poset of) the usual \v{C}ech simplicial
complex. The collection of the \v{C}ech posets $\Cech(r,k)$ assembles into a
bifiltration that we call the \deff{multi-\v{C}ech bifiltration} $\Cech$: for any $r\leq r'$
and $k \geq k'$, we have $\Cech(r,k)\subset \Cech(r',k')$.

To obtain a bifiltration of simplicial complexes we take the order complex
pointwise, as below. By \deff{order complex} $\order(P)$ of a poset $P$ we mean the
simplicial complex whose set of vertices is $P$ and whose $m$-simplices are the
chains of $P$ of length $m+1$.

\begin{definition}
  The \deff{subdivision \v{C}ech bifiltration} $\Sub$ is given by
  $\Sub(r,k) \coloneqq \order(\Cech(r,k))$ with the internal maps being the inclusions.
\end{definition}

Now we can state the multicover nerve theorem~\cite{sheehyMulticoverNerveGeometric2012,cavannaWhenWhyTopological2017,blumbergStability2ParameterPersistent2024}:

\begin{restatable}[Multicover nerve theorem]{theorem}{firstmulticovernerve}\label{thm:multicover_nerve}
  The multicover bifiltration $\Cov$ is weakly equivalent to the subdivision \v{C}ech
  bifiltration $\Sub$.
\end{restatable}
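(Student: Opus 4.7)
The plan is to construct an intermediate bifiltration $B$ sitting in a zigzag
\[
  \Cov \xleftarrow{\simeq} B \xrightarrow{\simeq} \Sub
\]
of pointwise weak equivalences, via the standard homotopy-colimit (blowup) of
the functor of cover intersections. For each $(r,k)$, I would take the functor
$F_{r,k}\colon \Cech(r,k)^{\op}\to \cTop$ sending
$\sigma\mapsto V_{\sigma}\coloneqq \bigcap_{x\in\sigma} B(x,r)$, with the
inclusions $V_{\tau}\hookrightarrow V_{\sigma}$ for $\sigma \subseteq \tau$, and
define $B(r,k)$ to be its blowup: points are pairs
$(\sigma_{0}\subsetneq\dots\subsetneq \sigma_{n},\, p)$ with
$p\in V_{\sigma_{n}}$, glued along the usual simplicial face identifications.
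Along with it come two natural projections:
$\pi_{r,k}\colon B(r,k)\to \Cov(r,k)$ dropping the chain, and
$\rho_{r,k}\colon B(r,k)\to \Sub(r,k)$ dropping the point.

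Next, I would show that both $\pi_{r,k}$ and $\rho_{r,k}$ are weak equivalences
via standard nerve theorem / projection lemma arguments. Both rest on one
geometric fact: each non-empty $V_{\sigma}$ is convex (a finite intersection of
closed balls in a normed space) and hence contractible. This contractibility
implies, on the one hand, that the natural transformation $F_{r,k}\to \ast$
induces a weak equivalence
$\rho_{r,k}\colon B(r,k) \to \mathrm{hocolim}\,\ast = \Sub(r,k)$;
on the other, that the cover $\{V_{\sigma}\}_{\sigma}$ of $\Cov(r,k)$ is good in
the sense of the nerve theorem, so that the canonical comparison
$\pi_{r,k}\colon B(r,k)\to \mathrm{colim}\,F_{r,k}=\Cov(r,k)$ is a weak
equivalence. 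As a sanity check on the latter, the homotopy fiber of
$\pi_{r,k}$ over $p\in\Cov(r,k)$ is the order complex of the subposet
$\Set{\sigma\subseteq \sigma(p)\given \abs{\sigma}\geq k}$ of $\Cech(r,k)$,
where $\sigma(p)\coloneqq \Set{x\in X\given \norm{x-p}\leq r}$; this subposet
has $\sigma(p)$ (which satisfies $\abs{\sigma(p)}\geq k$ since $p\in\Cov(r,k)$)
as a top element, so it is conically contractible.

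The main obstacle will be not the pointwise equivalences themselves but their
naturality in $(r,k)$. For $r\leq r'$ and $k\geq k'$ the canonical inclusion
$\Cech(r,k)\subseteq \Cech(r',k')$ and the inclusion of intersections at a
larger radius induce morphisms $B(r,k)\to B(r',k')$ that must commute with
both $\pi$ and $\rho$, and one must verify that these assemble into a zigzag
of pointwise weak equivalences in the sense
of~\cref{sec:topological_equivalence}. Since the blowup is itself functorial
in its input diagram and both projections are defined by the same uniform
recipe, this reduces to a routine bookkeeping check, yielding the zigzag
$\Cov \xleftarrow{\pi} B \xrightarrow{\rho} \Sub$ and hence the weak equivalence
$\Cov \simeq \Sub$.
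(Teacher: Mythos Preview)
Your proposal is correct and is essentially the standard homotopy-colimit/blowup proof from the references the paper cites for this theorem. Note that the paper itself does not give a proof of \cref{thm:multicover_nerve}: it is quoted as a known result from~\cite{sheehyMulticoverNerveGeometric2012,cavannaWhenWhyTopological2017,blumbergStability2ParameterPersistent2024}, so there is no in-paper proof to compare against directly.

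That said, the paper's framework for the \emph{sparse} analogue (\cref{lem:func_new_cover}) is a close cousin of your argument, and applying it with the cover $A=(B(x,r))_{x\in X}$ and the subposet $P=\Cech(r,k)$ would reprove \cref{thm:multicover_nerve}. The difference is in the factoring of the simplicial side: you project the blowup $B(r,k)=\mathrm{hocolim}\,F_{r,k}$ directly onto $\order(\Cech(r,k))=\Sub(r,k)$ via the natural transformation $F_{r,k}\Rightarrow *$, while the paper's machinery first passes through the nerve $N(A^{P})$ of the induced cover $(V_{\sigma})_{\sigma\in P}$ (via the functorial nerve theorem, \cref{thm:nerve}) and then uses Quillen's Theorem~A (\cref{lem:new_cover}) to identify $\order(N(A^{P}))\simeq\order(P)$. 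Your route is slightly more economical; the paper's factorization is what it needs to handle the sparse case, where the subposet condition of \cref{lem:new_cover} becomes a genuine hypothesis to verify.
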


\subparagraph{Sparsification} We now extend the subdivision \v{C}ech bifiltration and the
multicover nerve theorem to the
sparse multicover bifiltration. As in the definition of the sparse multicover,
we replace the balls with sparse balls and the cardinality with the covering
weight, resulting in the following definition; to be compared with the
definition of $k$-\v{C}ech poset (\cref{def:cech_poset}).

\begin{definition}\label{def:sparse_intersection_poset}
  For a given scale $r \in [0,\infty)$ and order $k\in\N$, we define the \deff{sparse
    \v{C}ech poset} $\SCech(r,k)$ to be
  \begin{equation*}
    \begin{split}
      \SCech(r,k) &\coloneqq \bigcup_{s\leq r}\Set*{\sigma\subset X \given \bigcap_{x\in\sigma}\SB(x, s)\neq\varnothing \text{ and } \sum_{x\in\sigma}c_{s}(x) \geq k}%
    \end{split}
  \end{equation*}
  with the order given by inclusion.
\end{definition}
In the definition, the union is there to guarantee that the $\SCech(r,k)$
assemble into a bifiltration, because now sparse balls disappear, unlike
before; this is the combinatorial analogue of taking \textit{cones} as in
Cavanna, Jahanseir and Sheehy's work~\cite{cavannaGeometricPerspectiveSparse2015}. Again taking the order complex:

\begin{definition}\label{def:sparse_subdivision}
  The \deff{sparse subdivision bifiltration} $\SSub$ is given by
  $\SSub(r,k)\coloneqq \order(\SCech(r,k))$, the order complex of the sparse
  \v{C}ech poset $\SCech(r,k)$.
\end{definition}
In other words, the $m$-simplices of $\SSub(r,k)$ are sequences
$\sigma_{0} \subset \dots \subset \sigma_{m}$ of $m + 1$ subsets of points, all
in $\SCech(r,k)$ and where each containment is strict.

\begin{restatable}[Sparse multicover nerve theorem]{theorem}{firstsparsenerve}\label{thm:sparse_multicover_nerve}
  The sparse subdivision bifiltration $\SSub$ is weakly equivalent to the sparse
  multicover $\SCov$.
\end{restatable}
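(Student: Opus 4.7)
The plan is to adapt the classical blowup (double-nerve) proof of the nerve theorem to this bifiltered setting, adding one ingredient to handle the disappearance of sparse balls. Concretely, I would construct an auxiliary bifiltration $Y$ together with a zigzag $\SSub \xleftarrow{\isomorphic} Y \xrightarrow{\isomorphic} \SCov$ of pointwise weak equivalences, from which weak equivalence in the sense of~\cref{sec:topological_equivalence} follows immediately.

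For each $(r,k)$ I would define $Y(r,k)$ as a blowup complex: for every chain $\tau=(\sigma_{0}\subsetneq\cdots\subsetneq\sigma_{m})$ in $\SCech(r,k)$, pick a witnessing scale $s_{\tau}\leq r$ at which $\bigcap_{x\in\sigma_{m}}\SB(x,s_{\tau})\neq\varnothing$ and $\sum_{x\in\sigma_{m}} c_{s_{\tau}}(x)\geq k$, and attach the cell $\Delta^{m}\times\bigcap_{x\in\sigma_{m}}\SB(x,s_{\tau})$ along the usual face identifications in the simplex factor. Let $\pi_{1}\colon Y\to\SSub$ be the projection that forgets the ball factor, and $\pi_{2}\colon Y\to\SCov$ the projection that forgets the simplex factor.

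Both projections should be pointwise weak equivalences by fiber arguments, in the style of the standard double-nerve proof. For $\pi_{1}$, a Quillen-type projection lemma applies, since the preimage of the open cell indexed by $\tau$ is $\bigcap_{x\in\sigma_{m}}\SB(x,s_{\tau})$, a finite intersection of closed balls in the normed space $\R^{d}$, hence convex and contractible. For $\pi_{2}$, the classical nerve lemma applies: the fiber over $p\in\SCov(r,k)$ is the order complex of the subposet of $\SCech(r,k)$ consisting of those $\sigma$ whose witnessing intersection contains $p$, and this subposet has a unique maximal element (the full set of sparse balls at the instantaneous scale $r$ containing $p$), so its order complex is contractible. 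Once these are in place, it remains to verify that the obvious bifiltration maps $Y(r,k)\to Y(r',k')$, for $r\leq r'$ and $k\geq k'$, commute strictly with $\pi_{1}$ and $\pi_{2}$.

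The hard part is arranging $\pi_{2}$ so that it actually lands in $\SCov(r,k)$ and is natural in $(r,k)$. The sparse \v{C}ech poset is cumulative over $s\leq r$, so a simplex $\sigma\in\SCech(r,k)$ may only be witnessed by some $s_{\tau}<r$, at which point some sparse balls $\SB(x,r)$ may already be empty; the naive image of the corresponding blowup cell need not lie inside $\SCov(r,k)$. This is resolved via the disappearance lemma of the excerpt (the one giving $\SB(x,\gamma)\subset\SB(y,\gamma)$ at $\gamma=(1+3\epsilon)\slow(x)$ with $\slow(y)\geq\gamma$), which supplies a cone structure, in the spirit of Cavanna--Jahanseir--Sheehy, that lets one redirect $\pi_{2}$ through non-slowed sparse balls present at scale $r$ and do so consistently across changes of $(r,k)$.
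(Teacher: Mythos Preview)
Your blowup strategy is in the same spirit as the paper's proof, but the construction as written is not well-defined, and the difficulty is not where you locate it.

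The real obstruction is the face identifications in your blowup. A subchain $\tau'$ of $\tau=(\sigma_{0}\subsetneq\cdots\subsetneq\sigma_{m})$ has its own top element $\sigma_{m'}\subset\sigma_{m}$ and its own witnessing scale $s_{\tau'}$, and gluing requires $\bigcap_{x\in\sigma_{m}}\SB(x,s_{\tau})\subset\bigcap_{x\in\sigma_{m'}}\SB(x,s_{\tau'})$. There is no evident way to choose the $s_{\tau}$ so that this holds for all face relations and for all $(r,k)$ simultaneously: if $s_{\tau'}<s_{\tau}$ the balls shrink and the inclusion can fail, while if $s_{\tau'}>s_{\tau}$ then $s_{\tau'}$ may fail the weight condition for $\sigma_{m'}$. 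Your $\pi_{2}$ fiber argument inherits this problem: the claimed maximal element is the set of $x$ with $p\in\SB(x,r)$, but membership of a cell in the fiber is governed by $s_{\tau}$, not by $r$, so you have not shown that this set dominates every $\sigma$ appearing in the fiber, nor that surjectivity of $\pi_2$ holds. Conversely, the issue you flag as the ``hard part''---that the image of a cell might not land in $\SCov(r,k)$---is a non-issue: any $p\in\bigcap_{x\in\sigma_{m}}\SB(x,s_{\tau})$ lies in $\SCov(s_{\tau},k)$ by the choice of $s_{\tau}$, hence in $\SCov(r,k)$ by~\cref{prop:filtration}.

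The paper resolves exactly this coherence problem by the cone construction you mention only in passing: rather than choosing a single witnessing scale, it passes to $\R^{d}\times\R$ and takes $C_{x}=\bigcup_{s\leq r}\SB(x,s)\times\{s\}$, so that $C_{\sigma}=\bigcap_{x\in\sigma}C_{x}$ encodes all scales at once and the inclusions $C_{\sigma'}\subset C_{\sigma}$ for $\sigma\subset\sigma'$ are automatic. The functorial nerve theorem then applies to the induced cover $(C_{\sigma})_{\sigma\in P}$ for $P=\SCech(r,k)$, a Quillen-A argument (\cref{lem:new_cover}) identifies $\order(N(C^{P}))$ with $\order(P)$, and a straight-line deformation retracts $\bigcup_{\sigma\in P}C_{\sigma}$ onto $\SCov(r,k)\times\{r\}$, again using~\cref{prop:filtration}. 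In short, the cones are not a patch for the landing problem; they are what makes the blowup well-defined in the first place.
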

The proof is deferred to~\cref{sec:proof_sparse_nerve}. It reduces to the multicover nerve
theorem, incorporating the covering weight and using the \textit{cones} strategy
of Cavanna, Jahanseir and Sheehy~\cite{cavannaGeometricPerspectiveSparse2015}.

\section{Size}\label{sec:size}

We are now ready to bound the size of the sparse subdivision
bifiltration. Specifically, we bound the number of simplices in the largest
complex in the bifiltration, and show that each simplex is born at a constant number of
\deff{critical grades}. We say that the grade $(r, k)\in[0, \infty)\times\N$ is
\deff{critical} for a simplex $\sigma$, if $\sigma\in\SSub(r',k')$ for all
$r < r'$ and $k \geq k'$, but
$\sigma\not\in\SSub(s,l)$ for any $s < r$ and $l > k$.

\subsection{Number of simplices}

The rest of this section is dedicated to proving the following theorem. We write
$\SSub$ to refer both to the bifiltration and to the largest complex in the bifiltration,
$\SSub = \bigcup_{r,k}\SSub(r,k)$, when no confusion is possible.

\begin{theorem}\label{thm:size_simplices}
  The bifiltration $\SSub$ has $O(\abs{X})$ simplices for fixed
  $\epsilon$ and $d$.
\end{theorem}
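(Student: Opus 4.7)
The plan is to combine three bounds into a linear total: (i) a constant bound $K=K(d,\epsilon)$ on the size of any subset $\sigma\subset X$ that appears as a vertex of $\SSub$; (ii) a linear bound of $O(\abs{X})$ on the number of such $\sigma$; and (iii) a constant bound on the number of strict chains of subsets below any fixed $\sigma$. The engine behind (i) and (ii) is a standard ball-packing argument in $\R^{d}$, applied at two different scales.

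For (i), if $\bigcap_{x\in\sigma}\SB(x,s)\neq\varnothing$ at some scale $s$, then picking a common point $p$ shows $\norm{x-p}\leq\rho_{x}(s)\leq s$ for every $x\in\sigma$, so pairwise distances in $\sigma$ are at most $2s$. Non-emptiness of each $\SB(x,s)$ also forces $s\leq (1+3\epsilon)\slow(x)$, hence $\ins(x)\geq cs$ with $c\coloneqq\frac{\epsilon}{(1+3\epsilon)(1+\epsilon)}$. Any two such points lie in $\mathcal{S}(r)$ for every $r<cs$, so the $r$-packing property yields $\norm{x-y}\geq cs$ in the limit. The packing of $\sigma$ into a ball of radius $s$ with minimum separation $cs$ then bounds $\abs{\sigma}\leq K(d,\epsilon)$.

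For (ii), I would assign each appearing $\sigma$ a unique \emph{pivot} $x^{*}(\sigma)\in\sigma$ minimizing $\slow$ on $\sigma$, breaking ties by a fixed order on $X$. For fixed $x^{*}$, a set $\sigma$ pivoted at $x^{*}$ must be witnessed at some $s\leq(1+3\epsilon)\slow(x^{*})$, so every $y\in\sigma$ satisfies both $\norm{y-x^{*}}\leq 2(1+3\epsilon)\slow(x^{*})$ and $\ins(y)\geq\ins(x^{*})$. Packing in $\R^{d}$ bounds by some $N=N(d,\epsilon)$ the number of points of $X$ meeting both conditions, giving at most $2^{N}$ subsets with pivot $x^{*}$ and so at most $\abs{X}\cdot 2^{N}=O(\abs{X})$ appearing sets overall. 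For (iii), a simplex of $\SSub$ is a strict chain $\sigma_{0}\subsetneq\cdots\subsetneq\sigma_{m}$; if the sparse balls of $\sigma_{m}$ intersect at scale $s$, the same holds for every $\sigma_{i}\subset\sigma_{m}$, so every such chain is automatically valid in $\SCech$. Since $\abs{\sigma_{m}}\leq K$, the number of chains ending at any $\sigma_{m}$ is bounded by the number of chains in the subset lattice of a $K$-element set, a constant $T=T(K)$. Multiplying gives at most $\abs{X}\cdot 2^{N}\cdot T=O(\abs{X})$ simplices.

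The main obstacle I expect is making the packing bound $N(d,\epsilon)$ in step (ii) fully precise: one must verify that the proximity-plus-separation conditions on $y$ pin $y$ down to a fixed net of $X$ inside a ball of controlled radius, and that the resulting constant does not blow up over the range of $\epsilon$ of interest. The chain count in (iii) depends only on $K$ and is otherwise elementary, and the vertex bound (i) follows by the same packing idea applied at a single scale.
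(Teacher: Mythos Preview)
Your proposal is correct and is essentially the paper's argument: assign to each simplex the point of minimal insertion radius in its top set $\sigma_{m}$ (your ``pivot''), then bound the fiber over each point by a constant via the packing argument you describe in (ii). The paper compresses your three steps into a single lemma showing $\abs{\min^{-1}(x)}=O(1)$, and the obstacle you flag---making $N(d,\epsilon)$ precise---is resolved exactly as you sketch, by comparing the volume of a ball of radius $2(1+3\epsilon)\slow(x^{*})$ against disjoint balls of radius proportional to $\ins(x^{*})$; your step (i) is in fact redundant, since the bound $\abs{\sigma_{m}}\leq N$ already falls out of (ii).
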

The following argument is analogous to the one used in previous sparse
filtrations~\cite{sheehyLinearSizeApproximationsVietoris2013,cavannaGeometricPerspectiveSparse2015}.
Consider the map $\min\colon\SSub\to X$ that takes each simplex
$\sigma = \sigma_{0}\subset\cdots\subset\sigma_{m}$ to the minimum
point $x$ in $\sigma_{m}$ with respect to the order given by their insertion radius
$\ins(x)$, breaking ties arbitrarily. The points in each preimage of this map
$\min\colon\SSub\to X$ are not too far from each other:
\begin{lemma}\label{lem:covering_of_friends}
  Let $\sigma = \sigma_{0}\subset\cdots\subset\sigma_{m}\in\SSub$ be a simplex
  and let $x\coloneqq\min(\sigma)$. The points $\sigma_{m}$ are contained in a
  ball of radius $2(1+3\epsilon)\slow(x)$ around $x$.
\end{lemma}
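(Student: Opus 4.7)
The plan is to exploit that membership of $\sigma_{m}$ in $\SCech(r,k)$ is witnessed by some scale $s\leq r$ at which the sparse balls around the vertices of $\sigma_{m}$ share a common point $p$; the bound on $\slow(x)$ then follows from the fact that $x\in\sigma_{m}$ has a non-empty sparse ball at that scale.

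More concretely, I would unfold~\cref{def:sparse_intersection_poset}: since $\sigma_{m}\in\SCech(r,k)$, there exist $s\leq r$ and $p\in\R^{d}$ with $p\in\SB(y,s)$ for every $y\in\sigma_{m}$. Because $x\in\sigma_{m}$ and $\SB(x,s)\neq\varnothing$, \cref{def:sparse_ball} forces $s\leq (1+3\epsilon)\slow(x)$. It is worth noting that the argument goes through for any vertex of $\sigma_{m}$, but choosing $x$ to be the one with smallest insertion radius makes this bound as tight as possible and will matter for the counting step that builds on this lemma.

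The rest is a triangle inequality, provided that $\norm{y-p}\leq s$ for every $y\in\sigma_{m}$, i.e., that $\rho_{y}(s)\leq s$ regardless of which phase of the lifetime $s$ belongs to. On $[0,\slow(y)]$ this is trivial since $\rho_{y}(s)=s$; on $(\slow(y),(1+3\epsilon)\slow(y)]$ it follows from $\rho_{y}(s)\leq U_{y}(s)$ together with a short computation showing that $U_{y}(s)\leq s$ is equivalent to $\slow(y)\leq s$, which is automatic in this phase. Combining,
\[
  \norm{y-x}\leq \norm{y-p}+\norm{p-x}\leq 2s\leq 2(1+3\epsilon)\slow(x),
\]
which is the desired bound. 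I do not anticipate a real obstacle; the only point to handle carefully is that $\SCech(r,k)$ is defined as a union over $s\leq r$, so the witnessing scale $s$ need not coincide with $r$, but all inequalities above are in terms of $s$, so this causes no issue.
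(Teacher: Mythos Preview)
Your argument is correct and essentially the same as the paper's: both pick a witnessing scale $s$ at which the sparse balls of $\sigma_{m}$ intersect in some point $p$, use $x\in\sigma_{m}$ to get $s\le(1+3\epsilon)\slow(x)$, and then apply the triangle inequality through $p$. The only difference is that you spell out why $\rho_{y}(s)\le s$, which the paper leaves implicit.
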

\begin{proof}
  Since the sparse ball around $x$ is empty at scales greater than
  $\alpha\coloneqq (1+3\epsilon)\slow(x)$, any intersection between the sparse
  balls of $\sigma_{m}$ must happen at a scale $r\leq \alpha$. Let $p$ be any
  point in this intersection. It follows that for any other $y\in\sigma_{m}$ we
  have $\norm{x - y} \leq \norm{x - p} + \norm{p - y} \leq 2\alpha$.
\end{proof}

\begin{lemma}\label{lem:simplex_min_bound}
  For each $x\in X$, the number of simplices mapped to $x$ under $\min$ is
  $O(1)$, for fixed $\varepsilon$ and dimension $d$.
\end{lemma}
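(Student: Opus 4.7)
The plan is to turn this into a two-stage counting argument: first bound the number of distinct points that can appear in $\sigma_m$, then bound the number of chains one can form inside a fixed $\sigma_m$.

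For the first stage, fix a simplex $\sigma = \sigma_0\subsetneq\cdots\subsetneq\sigma_m$ with $\min(\sigma)=x$, and consider the point set $\sigma_m$. Every $y\in\sigma_m$ satisfies two constraints. On the one hand, \cref{lem:covering_of_friends} places $y$ in the ball around $x$ of radius $R\coloneqq 2(1+3\epsilon)\slow(x) = 2(1+3\epsilon)\tfrac{1+\epsilon}{\epsilon}\ins(x)$. On the other hand, since $x$ minimises the insertion radius on $\sigma_m$, we have $\ins(y)\geq \ins(x)$ for every $y\in\sigma_m$. Hence for every $\delta>0$ all the points of $\sigma_m$ lie in $\mathcal{S}(\ins(x)-\delta)$, which is an $(\ins(x)-\delta)$-packing of $X$; letting $\delta\to 0$, the points of $\sigma_m$ are pairwise at distance at least $\ins(x)$.

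For the second stage, a standard volume packing argument in $\R^{d}$ converts these two constraints into a cardinality bound. Since the points of $\sigma_m$ are pairwise $\ins(x)$-separated, the open balls of radius $\ins(x)/2$ around them are disjoint, and since they all lie in the ball of radius $R$ around $x$ they are contained in the ball of radius $R+\ins(x)/2$. Comparing volumes in any norm on $\R^{d}$ yields
\begin{equation*}
  \abs{\sigma_m} \;\leq\; \bigl(2R/\ins(x)+1\bigr)^{d} \;=\; \Bigl(4(1+3\epsilon)\tfrac{1+\epsilon}{\epsilon}+1\Bigr)^{d} \;=:\; N,
\end{equation*}
a constant depending only on $d$ and $\epsilon$.

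It remains to count chains contained in a fixed $\sigma_m$ with $\abs{\sigma_m}\leq N$ and $x\in\sigma_m$. The number of such sets $\sigma_m$ is at most $2^{N-1}$, and the number of strictly increasing chains of subsets inside any fixed $\sigma_m$ is bounded by the total number of chains in the Boolean lattice $2^{\sigma_m}$, which is a constant depending only on $N$. Multiplying these two constants gives an $O(1)$ bound independent of $x$, finishing the proof. I expect the only point that requires any care is the passage from ``$\ins(y)\geq\ins(x)$'' to ``pairwise distance $\geq\ins(x)$'', which is why the $\delta\to 0$ step is worth making explicit; everything else is a direct application of the previous lemma and a volume argument.
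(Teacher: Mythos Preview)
Your argument is correct and follows the same route as the paper's, but there is a small logical slip in how you pass from the first stage to the second. You fix a single simplex and bound $\lvert\sigma_m\rvert\le N$; from that alone one cannot conclude that ``the number of such sets $\sigma_m$ is at most $2^{N-1}$'', since a priori different simplices could draw their points from different $N$-element subsets of $X$. What you need---and what your constraints already give---is that the set
\[
A\coloneqq\{\,y\in X:\ins(y)\ge\ins(x)\text{ and }\lVert y-x\rVert\le R\,\}
\]
itself has at most $N$ points: your ball and packing constraints do not depend on the chosen simplex, so the same volume comparison applied to $A$ yields $\lvert A\rvert\le N$. Then every admissible $\sigma_m$ is a subset of $A$ containing $x$, and your $2^{N-1}$ count goes through. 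The paper phrases the proof in exactly this way, defining $A$ as the union of all such $\sigma_m$ and bounding $\lvert A\rvert$ directly (via $A\subset\Cont_\alpha(X)$ and \cref{lem:covering_properties}, which gives the same packing radius $\ins(x)$ you obtain from the persistent net).
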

\begin{proof}
  Let
  $A\coloneqq \bigcup_{\sigma_{0}\subset\cdots\subset\sigma_{m}\in\min^{-1}(x)}\sigma_{m}$
  be the set of all points in simplices $\sigma$ such that $\min\sigma = x$.
  It suffices to show that $A$ has $O(1)$ points, for fixed
  $\varepsilon$ and $d$.
  Let $\alpha \coloneqq (1+3\epsilon)\slow(x)$. We claim that
  $A\subset\Cont_{\alpha}(X)$. Indeed, for every $a\in A$ we have
  $\ins(a) \geq \ins(x)$, by the definition of $\min$, and thus the sparse ball
  of $a$ is non-empty at scale $\alpha$.

  By~\cref{lem:covering_of_friends}, the points in $A$ are contained in a ball
  of radius $2\alpha$, and, by~\cref{lem:covering_properties}, they are
  $\frac{\epsilon}{(1 + \epsilon)(1+3\epsilon)}\alpha$-packed, because
  $A\subset\Cont_{\alpha}(X)$ as shown. By comparing the volume of balls of
  radius $2\alpha$ and $\frac{\epsilon}{(1 + \epsilon)(1+3\epsilon)}\alpha$, we
  conclude that $A$ consists of
  $O\left(\left(\frac{2(1+\epsilon)(1+3\epsilon)}{\epsilon}\right)^{d}\right)$
  points.
\end{proof}

All in all, there can be at most $O(\abs{X})$ simplices, since at most a constant
number, for a fixed $\epsilon$ and $d$, of simplices is mapped to each point,
finishing the proof of the theorem.

\subsection{Critical grades} The sparse subdivision bifiltration, unlike the
subdivision \v{C}ech bifiltration, is
\textit{multicritical}, which means that each simplex may have multiple critical
grades. Let us briefly describe the critical grades of a simplex
$\sigma = \sigma_{0}\subset\cdots\subset\sigma_{m}\in\SSub$. Ordering the
critical grades by scale $r$, the first critical grade has scale $r$ equal to
the first scale at which the sparse balls around the points $\sigma_{m}$
intersect, and order $k$ equal to the sum of the covering weights of the points
of $\sigma_{0}$ at such a scale. Further critical grades of $\sigma$ correspond to scales $r'$ at which
the covering weight of the points of $\sigma_{0}$ increases, until one of the
associated sparse balls disappears. Still, there are not many
critical grades, as another packing argument shows:

\begin{theorem}\label{thm:size_critical_grades}
  Each simplex in $\SSub$ has $O(1)$ critical grades, for fixed $d$ and
  $\epsilon$.
\end{theorem}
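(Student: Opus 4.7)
Let $\sigma = \sigma_0 \subsetneq \cdots \subsetneq \sigma_m$ be a simplex of $\SSub$, let $x = \min(\sigma)$, and write $\alpha = (1+3\epsilon)\slow(x)$. Following the informal description preceding the theorem, I would identify the critical grades as the jumps of a single integer-valued function on the lifetime of $\sigma$, and then bound their number by a packing argument centered at $x$. The lifetime is the interval $[r_0, \alpha)$, where $r_0$ is the first scale at which the sparse balls of $\sigma_m$ intersect; throughout this interval no ball of $\sigma_0 \subset \sigma_m$ disappears. Consequently, the weight $w(s) \coloneqq \sum_{y \in \sigma_0} c_s(y)$ is non-decreasing on $[r_0, \alpha)$: the covering map $\Cont_s$ changes only when some ball vanishes, and any such remapping can only push new points \emph{into} $\Cont_s^{-1}(\sigma_0)$, never out. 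Thus the critical grades of $\sigma$ are in bijection with the initial grade $(r_0, w(r_0))$ and the distinct jump scales of $w$ in $[r_0, \alpha)$.

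Each such jump occurs at some $s^* = (1+3\epsilon)\slow(z)$ with $z \notin \sigma_0$, where $z$'s sparse ball disappears and some $y$ with $\Cont_{(s^*)^-}(y) = z$ is remapped to $y' = \Cont_{s^*}(y) \in \sigma_0$. I would bound the number of such $z$'s by packing. Two applications of the distance bound in \cref{lem:covering_properties} give $\norm{z - y} \leq \frac{\epsilon}{1+\epsilon}\slow(z)$ and $\norm{y - y'} \leq \frac{\epsilon}{1+\epsilon}s^*$; combined with \cref{lem:covering_of_friends} (which places $y' \in \sigma_m \subset B(x, 2\alpha)$) and with $s^* < \alpha$, these confine every such $z$ to a ball $B(x, C\slow(x))$ for a constant $C$ depending only on $\epsilon$. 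For pairwise separation, distinct $z$'s have distinct insertion radii, and the persistent net property yields $\norm{z - z'} \geq \min(\ins(z), \ins(z'))$. Since $\sigma_m$ is $\ins(x)$-packed by the same net property, any two of its sparse balls (each of radius at most the scale) can intersect only once $r \geq \ins(x)/2$, so $r_0 \geq \ins(x)/2$ and hence $\ins(z) \geq \frac{\epsilon}{(1+\epsilon)(1+3\epsilon)}\cdot\frac{\ins(x)}{2} = \Omega(\ins(x))$.

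A standard volume comparison in $\R^d$ between a ball of radius $O(\slow(x))$ and a packing of radius $\Omega(\ins(x))$, using $\slow(x) = \frac{1+\epsilon}{\epsilon}\ins(x)$, then bounds the number of such $z$'s by $O\bigl(((1+\epsilon)/\epsilon)^d\bigr) = O(1)$ for fixed $d$ and $\epsilon$, giving the claim. The delicate step is the lower bound $r_0 \geq \ins(x)/2$, which requires $\abs{\sigma_m} \geq 2$; the remaining single-vertex case $\sigma = \{x\}$ needs a separate but structurally analogous dyadic packing of the contributing $z$'s around $x$, exploiting that $\norm{z - x} \leq (1+3\epsilon)\ins(z)$ by \cref{lem:covering_properties} together with the fact that $x$ is $z$'s successor in the covering sequence.
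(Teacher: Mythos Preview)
For chains with $|\sigma_m|\geq 2$ your argument is essentially the paper's: both fix $\alpha=(1+3\epsilon)\slow(x)$, obtain the lower scale cut-off $\beta=\ins(x)/2$ from the packing of the vertices of $\sigma$ (you pack $\sigma_m$, the paper packs $\sigma_0$; both give the same $\beta$), confine the disappearing points $z$ to a ball of radius $O(\alpha)$ about $x$, and finish with a volume comparison. The paper packs the $z$'s slightly more directly by noting that they all lie in $\Cont_\beta(X)$ and invoking \cref{lem:covering_properties}(2), which yields a uniform packing radius $\frac{\epsilon}{(1+\epsilon)(1+3\epsilon)}\beta$ without your detour through $\min(\ins(z),\ins(z'))$. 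Your aside that distinct $z$'s have distinct insertion radii is neither needed nor guaranteed, though the packing conclusion survives without it. Using $\sigma_m$ rather than $\sigma_0$ for the cut-off is, if anything, more careful: $\sigma_0$ can be a singleton even when $\sigma_m$ is not.

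The gap is the singleton case $\sigma_m=\{x\}$ you flag at the end: your dyadic packing cannot yield $O(1)$. In $\R^1$ take $w=L$ large, $x=0$, and $z_k=\rho^k$ for $k=1,\dots,n$ with, say, $\rho=1/10$. A greedy permutation starting at $w$ gives $\ins(z_k)=\rho^k$, and in each relevant net the nearest neighbour of $z_k$ is $x$ (the alternative $z_{k-1}$ sits at distance $9\rho^k$), so every covering sequence is $(z_k,x,w)$. Hence $c_s(x)$ jumps at $n$ distinct scales, all below $\alpha=(1+3\epsilon)\slow(x)$, and the vertex $\{x\}$ acquires $\Theta(n)$ critical grades; the dyadic decomposition here has $\Theta(n)$ non-empty shells and delivers only $O(\log\Delta)$. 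The paper's proof shares this gap---its restriction to $[\beta,\alpha]$ rests on the packing of $\sigma_0$, which is vacuous when $|\sigma_0|=1$---so this appears to be an issue with the stated bound rather than with your reading of it.
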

\begin{proof}
  Let $\sigma = \sigma_{0}\subset\cdots\subset\sigma_{m}\in\SSub$ and
  let $x\coloneqq \min\sigma$. The simplex $\sigma$ has no critical grade with
  scale greater than $\alpha\coloneqq (1+3\epsilon)\slow(x)$, because the sparse
  ball of $x$ disappears then. Thus, we bound the number of scales
  $s\leq \alpha$ at which the sum of covering weights
  $\sum_{x\in \sigma_{0}} c_{s}(x)$ of the points of $\sigma_{0}$ increases,
  which bounds the number of critical grades of $\sigma$.

  By the proof of~\cref{lem:simplex_min_bound}, the points in $\sigma_{0}$
  are a $2\beta$-packing, where
  $\beta \coloneqq \frac{\epsilon}{2(1+\epsilon)(1+3\epsilon)}\alpha$. Thus, any
  intersection between sparse balls around points in $\sigma_{0}$ must happen at scales
  at least $\beta$, and thus we are interested in the range of scales
  $[\beta,\alpha]$.

  Let $A\subset X$ be the subset of points $a\in X\setminus \sigma_{0}$ whose
  sparse ball disappears in the interval $[\beta, \alpha)$, meaning
  $(1+3\epsilon)\slow(a)\in[\beta, \alpha)$ and, as a result, the covering
  weight of a point in $\sigma_{0}$ increases, meaning there is a point $y\in X$
  such that $a$ and a point $w\in\sigma_{0}$ are one after the other in the
  covering sequence of $y$; that is, $y_{i} = a$ and $y_{i+1} = w$.
  To finish the argument, it suffices to show that $\abs{A} = O(1)$. First, note that
  $A\subset\Cont_{\beta}(X)$. It follows that the points in $A$ are
  $\frac{\epsilon}{(1+\epsilon)(1+3\epsilon)}\beta$-packed,
  by~\cref{lem:covering_properties}.

  Let $a\in A$, and let $y\in X$ and $w\in\sigma_{0}$ as in the definition of $A$. By~\cref{lem:covering_properties} and~\cref{lem:covering_of_friends},
  \begin{equation*}
    \begin{split}
      \norm{x - a} &\leq \norm{x - w} + \norm{w - y} + \norm{y - a}\\
                   &\leq 2\alpha + (1+3\epsilon)\slow(a) + \frac{\epsilon}{1+\epsilon}\slow(a) \leq 4\alpha.
    \end{split}
  \end{equation*}
  All in all, the points of $A$ are in a ball of radius $4\alpha$ and
  are all
  $\frac{\epsilon}{(1+\epsilon)(1+3\epsilon)}\beta=\frac{\epsilon^{2}}{2(1+\epsilon)^{2}(1+3\epsilon)^{2}}\alpha$
  apart. The result follows by comparing the volume of disjoint balls around
  points in $A$ and the volume of a ball of radius $4\alpha$.
\end{proof}

\section{Computation}\label{sec:computation}

We show how to compute the sparse subdivision bifiltration. The first step is
computing a persistent net and the covering map. As already mentioned
in~\cref{sec:persistence_nets} this reduces to computing a \textit{greedy
  permutation}. The next step is to compute all potential simplices that appear
in the bifiltration, as those consisting of points around balls like those
of~\cref{lem:covering_of_friends}; for which we use a suitable proximity data
structure. Finally, to obtain the critical grades of the simplices we can use
the covering map and the minimum scale at which a subset of the sparse balls
intersect. The problem of computing this minimum scale, in the Euclidean case, can be written as an
LP-type problem~\cite{matousekSubexponentialBoundLinear1996}, and we show how to
solve it, much like computing the smallest enclosing ball of
balls~\cite{fischerSmallestEnclosingBall2004}. All these steps can be done in
$O(\abs{X}\log\Delta)$ time, where $\Delta$ is the \textit{spread} of $X$, the ratio
between the longest and shortest distances in $X$.

\subsection{Greedy permutations}
Let $(x_{1}, \dots, x_{n})$ be an ordering of
the $n$ points in $X$. Such a sequence is called \deff{greedy} if for every
$x_{i+1}$ and prefix $X_{i}\coloneqq \Set{x_{1}, \dots, x_{i}}$, one has
\begin{equation*}
  d(x_{i+1}, X_{i}) = \max_{x\in X} d(x, X_{i}),
\end{equation*}
where $d(x, X_{i}) = \min_{x_{j}\in X_{i}} \norm{x - x_{j}}$. The
distance $r_{i+1}\coloneqq d(x_{i+1}, X_{i})$ is the \deff{insertion radius}
of $x_{i+1}$, where we take $r_{1} = \infty$ by convention. A
greedy permutation gives a persistent net $\mathcal{S}$ by taking
  $\mathcal{S}_{r} \coloneqq \Set{x_{i} \given r \leq r_{i}}$.

Gonzalez~\cite{gonzalezClusteringMinimizeMaximum1985} gives a method to compute
a greedy permutation. It consists of $n$ phases, divided into an initialization
phase and $\abs{X}-1$ update phases. At the end of phase $i$, it maintains a subset
 $S_{i} = \Set{x_{1}, \dots, x_{i}}\subset X$ and for
each $x_{j}\in S_{i}$ its \deff{Voronoi set}: the subset of points in $X$ that are
closer to $x_{j}$ than to any other point in $S_{i}$. We break ties arbitrarily
but consistently (say, by minimum index in $S_{i}$), so that the Voronoi sets
form a partition of $X$. In the initialization
phase we pick a random point $x_{1}$ and we set its Voronoi set to be every
other point. In the update phase $i$, we obtain the new set of leaders $S_{i}$ by adding to $S_{i-1}$ the point $x_{i}$ whose distance to $S_{i-1}$ is maximal, and updating
the Voronoi sets. This algorithm can be implemented in $O(\abs{X}^{2})$-time by going
over the whole set of points during each phase. An algorithm of
Clarkson~\cite{clarksonFastAlgorithmsAll1983,clarksonNearestNeighborSearching2003}, which also maintains the Voronoi sets,
can be shown to take
$O(\abs{X}\log\Delta)$-time~\cite{har-peledFastConstructionNets2006a}, and has been
implemented~\cite{clarksonNearestNeighborSearching2003,sheehyGreedypermutations}.

As mentioned, at every phase of the algorithm, each point $x\in X$ is assigned a Voronoi
set of a point $x_{j}\in S_{i}$: its nearest neighbor among those in $S_{i}$. We call the
sequence of such points $x_{j}$, as the algorithm progresses, the \deff{sequence of
  leaders} of a point $x$. Such a sequence necessarily starts with $x_{1}$ and ends
with the point itself.

From the sequence of leaders we can obtain the covering
map. The covering sequence of a point is a subsequence of its
sequence of leaders, in reverse order. To see this, note that the covering
sequence of $x$ starts with $x$ itself, as in the reversed sequence of leaders.
Then, when the sparse ball of $x$ disappears the next point in its covering
sequence is its nearest neighbor among those points with non-slowed sparse ball.
This set of points with non-slowed sparse ball is necessarily equal to one of
the subsets $S_{i}$ we have encountered through the execution of the
algorithm. In addition, balls are slowed in the same order as the reversed
sequence of leaders. All in all, we conclude the following:

\begin{lemma}
  A persistent net, the insertion radius of all the points, and the covering map
  can be computed in time $O(\abs{X}\log\Delta)$.
\end{lemma}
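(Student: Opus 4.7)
The plan is to execute what the paragraphs preceding the lemma have already set up, with minimal additional bookkeeping. First, I would invoke Clarkson's algorithm~\cite{clarksonFastAlgorithmsAll1983,clarksonNearestNeighborSearching2003}, which by~\cite{har-peledFastConstructionNets2006a} runs in $O(\abs{X}\log\Delta)$ time and produces the greedy ordering $(x_{1}, \dots, x_{n})$ together with its insertion radii $r_{1}, \dots, r_{n}$. The persistent net is then immediately available in the combinatorial form $\mathcal{S}(r) = \Set{x_{i} \given r \leq r_{i}}$, so no further work is needed for this component, and the values $\ins(x_{i}) = r_{i}$ (hence also $\slow(x_{i})$) are in hand.

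For the covering map, I would instrument Clarkson's algorithm so that, in addition to maintaining for each point $x\in X$ its current leader (its nearest neighbor in the current $S_{i}$), it appends to a per-point list every time the leader of $x$ changes. This yields, for each $x$, its sequence of leaders as defined in the text above the lemma. The key observation is that each such leader-change event is already counted as a unit of work by Clarkson's algorithm, so the total length of all leader sequences is $O(\abs{X}\log\Delta)$ and the instrumentation does not alter the asymptotic running time.

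Finally, to obtain the covering sequence of each $x\in X$ (\cref{def:covering_sequence}), I would make a single pass through the reversed sequence of leaders of $x$, starting with $x$ itself and greedily keeping the next entry $y$ only when $\slow(y) \geq (1+3\epsilon)\slow(z)$, where $z$ is the most recently kept entry. By the discussion directly preceding the lemma, this extracts exactly the covering sequence of $x$, and with the covering sequences in hand the value $\Cont_{r}(x)$ for any query $r$ is obtained by scanning until the first $x_{i}$ whose sparse ball is nonempty at scale $r$. The extraction pass is linear in the combined length of all leader sequences, hence also $O(\abs{X}\log\Delta)$. I do not see a serious obstacle: the only point to check is that instrumenting Clarkson's algorithm to log leader changes does not change its asymptotic cost, which is immediate since every leader change is already an event that the algorithm processes.
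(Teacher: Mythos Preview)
Your proposal is correct and follows essentially the same approach as the paper. In fact, the paper does not give a separate proof for this lemma at all: it is stated as an immediate consequence of the paragraphs that precede it, which already establish that Clarkson's algorithm runs in $O(\abs{X}\log\Delta)$ time while maintaining the Voronoi sets, and that the covering sequence of each point is a subsequence of its reversed sequence of leaders. Your write-up simply makes explicit the bookkeeping (logging leader changes and the single pass to extract the covering subsequence) that the paper leaves implicit; the one nontrivial point you rely on---that the slowing times along the reversed leader sequence are nondecreasing, so the greedy scan picks out the correct subsequence---is exactly the content of the paper's remark that ``balls are slowed in the same order as the reversed sequence of leaders.''
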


\subsection{Neighborhoods}
We also need to find those subsets of points whose
sparse balls have a non-empty intersection, at any scale.
By~\cref{lem:covering_of_friends}, any such subset $A$ is contained in a ball of
radius $2(1+3\epsilon)\slow(x)$ around $x$, where $x$ is the point of minimal
insertion radius in $A$. Making use of this observation, and
following~\cite{buchetSparseHigherOrder2024}, given a point $x\in X$, we call
the points of higher insertion radius than the one of $x$ and within distance
$2(1+3\epsilon)\slow(x)$ its \deff{friends}. Note that the number of friends of
a point is $O(1)$, because we are only considering points of higher insertion
radius, as
in~\cref{lem:simplex_min_bound}. Once we have computed the friends of $x$, we
can go over every possible subset and check whether the associated sparse balls
intersect at some scale, which is done in the following section. This procedure
is similar to other sparsification
schemes~\cite{cavannaGeometricPerspectiveSparse2015,buchetSparseHigherOrder2024}.

To compute the friends of every point, we can use a proximity search data
structure as follows. First, we initialize it
empty. In decreasing order of insertion radius (that is, the order given by the
greedy permutation), we add the points one by one. After adding a point $x$, we
query all the points within $2(1+3\epsilon)\slow(x)$ distance from $x$.

Cavanna, Jahanseir and Sheehy~\cite{cavannaGeometricPerspectiveSparse2015} give
such a data structure to compute all friends in such a way in time $O(\abs{X})$, again for
fixed $\epsilon$ and $d$. In practice one can also use the related greedy
trees~\cite{chubetProximitySearchGreedy2023}, which have been
implemented~\cite{sheehyGreedypermutations}.

\begin{lemma}
  Computing the friends of all points can be done in $O(\abs{X})$ time.
\end{lemma}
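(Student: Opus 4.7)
The plan is to process the points in the order of the greedy permutation, that is, in decreasing order of insertion radius, maintaining a proximity-search data structure into which each point is inserted right after being queried. Concretely, for each $x\in X$ handled in this order I would first issue a range query against the points already present in the structure, asking for all those within distance $2(1+3\epsilon)\slow(x)$ of $x$; the answer is exactly the set of friends of $x$, since every point currently in the structure has insertion radius at least $\ins(x)$. After the query I insert $x$ itself.

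The crux is to show two things: that the number of friends of every point is $O(1)$, and that the per-point cost of a query plus an insertion is $O(1)$ amortized, both for fixed $\epsilon$ and $d$. For the first, I would reuse the packing argument from the proof of~\cref{lem:simplex_min_bound}: the friends of $x$ all have insertion radius at least $\ins(x)$, hence by the packing property of the persistent net they are pairwise at distance at least $\ins(x)$, while simultaneously they sit inside a ball of radius
\begin{equation*}
  2(1+3\epsilon)\slow(x) = \frac{2(1+3\epsilon)(1+\epsilon)}{\epsilon}\ins(x)
\end{equation*}
around $x$. A standard volume comparison of Euclidean balls at these two radii then gives a bound on the number of friends depending only on $\epsilon$ and $d$.

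For the amortized $O(1)$ cost per point I would appeal directly to the proximity-search data structure of Cavanna, Jahanseir and Sheehy~\cite{cavannaGeometricPerspectiveSparse2015}, which is tailored to the greedy-permutation insertion order and supports exactly this operation: insert the next point of the permutation, and report all currently stored points within a constant multiple of its insertion radius. Since $2(1+3\epsilon)\slow(x)$ is by definition a constant multiple of $\ins(x)$, their analysis applies verbatim and yields an $O(\abs{X})$ total cost, matching the claim.

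The only step that needs any care is verifying that the query radius falls within the regime the CJS structure is designed for; this is immediate from $\slow(x) = \frac{1+\epsilon}{\epsilon}\ins(x)$. The bound on the output size of each query is essentially the same packing estimate used several times already in~\cref{sec:size}, so I expect no new obstacle beyond invoking the existing data structure correctly.
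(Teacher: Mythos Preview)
Your proposal is correct and matches the paper's approach almost verbatim: the paper also processes points in greedy-permutation order, uses a ball query of radius $2(1+3\epsilon)\slow(x)$, and defers the $O(\abs{X})$ total-time claim to the Cavanna--Jahanseir--Sheehy data structure, while the $O(1)$ bound on the number of friends is the same packing argument from~\cref{lem:simplex_min_bound}. The only cosmetic difference is that the paper inserts $x$ before querying (so $x$ appears in its own query result and is discarded), whereas you query first and then insert; this is immaterial.
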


\subsection{Intersection of sparse balls}\label{sec:intersection_balls}
We now show how to compute the scale at which a subset of sparse balls first
intersect, if they do so. We do it only for the Euclidean case; the
$l^{\infty}$-norm, which is of interest in~\cref{sec:rips}, is an easier case,
since it is enough to compute the intersection times pairwise.

Let us choose a specific radius function first, with the added restriction that
$0<\epsilon\leq 1$. For a point $x\in X$, we take
\begin{equation}\label{eq:quadratic_radius}
  \rho_{x}(r) =
  \begin{cases}
    r & r \leq \slow(x),\\
    \sqrt{K_{\epsilon} r^{2} + (1-K_{\epsilon})\slow(x)^{2}} & \slow(x) \leq r,
  \end{cases}
\end{equation}
where
  $K_{\epsilon} = \frac{1}{3(1+\epsilon)^{2}}$.
See~\cref{fig:graph}. It can be checked that $\rho_{x}(r)$ is indeed a radius
function.
\begin{figure}
  \centering
  \includegraphics{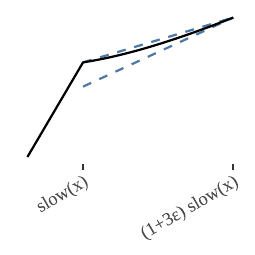}
  \caption{Graph of $\rho_x(r)$ of~\eqref{eq:quadratic_radius}, for some values
    $\slow(x)$ and $0 <\epsilon < 1$. The dashed lines are the functions
    $L_{x}(r)$ and $U_{x}(r)$, as in~\cref{def:sparse_ball}.%
  }\label{fig:graph}
\end{figure}

\subparagraph{Solving a problem}
Equipped with the radius function above and given a subset of points $A\subset X$,
we consider the problem
\begin{alignat}{2}\label{eq:problem_sparse_balls}
  \min_{z\in\R^{d},r\geq 0} \quad & r^{2}, & \\
  \textrm{subject to} \quad & \norm{a - z}^{2} \leq r^{2}, &\quad  a\in A,\nonumber\\
                                   & \norm{a - z}^{2} \leq K_{\epsilon}r^{2} + (1-K_{\epsilon})\slow(a)^{2},
                                     & a\in A.\nonumber
\end{alignat}
Now it is apparent why we chose the specific radius function
of~\cref{eq:quadratic_radius}: the right side of the inequalities above is
linear in $r^{2}$.
We claim that the solution of this problem gives the minimum scale at which the
sparse balls of $A$ have non-empty intersection, if such a scale exists. Let
$z_{\opt}\in\R^{d}$ and $r_{\opt}\geq 0$ be the solution of the problem
above (a unique solution always exists, as we show shortly). One can
check that $r_{\opt}$ is the minimum scale at which the balls
$B(a, \rho_{a}(r))$ around points $a\in X$ first intersect, by noting that
$r^{2} \leq K_{\epsilon}r^{2} + (1-K_{\epsilon})\slow(x)^{2}$ whenever $r\leq \slow(x)$. Thus, $r_{\opt}$ is
the minimum scale at which the sparse balls of $A$ intersect, unless a sparse
ball disappears before then, that is, unless
$r_{\opt} > (1+3\epsilon)\slow(a)$ for any $a\in A$.

We solve the problem of~\cref{eq:problem_sparse_balls} via a more general
problem. Given a set $H$ of $m$ constraints, each being a triple consisting of a
point $p_{i}\in\R^{d}$, a positive real constant $\alpha_{i}\in\R$ and a
non-negative real constant $\beta_{i}\in\R$, we look for
\begin{align}
  \min_{z\in\R^{d},s\geq 0} \quad & s, & \tag{M}\label{eq:problem_min} \\
  \textrm{subject to} \quad & \norm{p_{i} - z}^{2} \leq \alpha_{i}s + \beta_{i}, \quad i = 1, \dots, m. \nonumber
\end{align}
The problem of~\cref{eq:problem_sparse_balls} is an instance
of~\eqref{eq:problem_min} with $2\abs{A}$ constraints.

Note that by taking a set of points $p_{1}, \dots, p_{m}$ and setting
$\alpha_{i} = 1$ and $\beta_{i} = 0$, the solution to~\eqref{eq:problem_min}
above is the smallest enclosing ball of the points (with radius $\sqrt{s}$).

As explained in~\cref{sec:lp_type}, problem~\eqref{eq:problem_min} can be solved
using the framework of \textit{LP-type problems} and the MSW
algorithm~\cite{matousekSubexponentialBoundLinear1996}, following the strategy
for the smallest enclosing ball of balls of Fischer and
G{\"a}rtner~\cite{fischerSmallestEnclosingBall2004} (which is practical and
implemented in CGAL~\cite{cgal:fghhs-bv-24a}). In fact, problem~\eqref{eq:problem_min} fits the
\textit{smallest ball of point sets with strictly convex level sets}~\cite{zurcherSmallestEnclosingBall2007}, itself an extension of Fischer and
G{\"a}rtner's methods.

\section{The sparse multicover nerve theorem}\label{sec:proof_sparse_nerve}

In this section we prove the sparse multicover nerve
theorem,~\cref{thm:sparse_multicover_nerve}. The fundamental ingredients are the
nerve theorem, which we review first, a lemma that uses Quillen's theorem A and
draws from the proof of the multicover nerve theorem
in~\cite{blumbergStability2ParameterPersistent2024}
and~\cite{sheehyMulticoverNerveGeometric2012}, and a cover based on
\textit{cones} as in the work of Cavanna, Jahanseir, and
Sheehy~\cite{cavannaGeometricPerspectiveSparse2015}.

\subsection{The nerve theorem}
We use a version of the \textit{functorial} nerve theorem,
see~\cite{bauerUnifiedViewFunctorial2023}. Recall that given a cover
$A = (A_{i})_{i\in I}$, indexed by a finite set $I$, of a topological space,
often tacitly assumed to be $\bigcup_{i\in I} A_{i}$, the \deff{nerve} of $A$ is the
simplicial complex with vertex set $I$ and simplices
\begin{equation*}
  N(A) \coloneqq \Set*{\sigma\subset I \given \bigcap_{i\in\sigma} A_{i} \neq \varnothing}.
\end{equation*}

\begin{theorem}[Nerve theorem {\cite[Theorem 3.9]{bauerUnifiedViewFunctorial2023}}]\label{thm:nerve}
  Let $A = (A_{i})_{i\in I}$ be a finite cover of a $X\subset\R^{d}$, where
  each $A_{i}$ is closed and convex. Then there is a space $B(A)$ and homotopy
  equivalences $B(A)\to X$ and $B(A)\to N(A)$.

  Moreover, for any two $X\subset X'$ in $\R^{d}$, with closed and
  convex finite covers $(A_{i})_{i\in I}$ and $(A'_{j})_{j\in J}$, respectively, with $I\subset J$ and
  $A_{i}\subset A'_{i}$ for every $i\in I$, these homotopy equivalences assemble
  into a commutative diagram
\begin{equation*}
  \begin{tikzcd}
    X \ar[hook]{d} & \lar B(A) \rar \ar[hook]{d} & N(A) \ar[hook]{d}\\
    X' & \lar B(A') \rar & N(A'),
  \end{tikzcd}
\end{equation*}
where the vertical maps are inclusions.
\end{theorem}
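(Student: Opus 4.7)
The plan is to use the blow-up complex construction, which is the standard tool underlying functorial nerve theorems. Given a cover $A = (A_i)_{i\in I}$, define $B(A)$ to be the subspace of $\abs{N(A)}\times X$ consisting of pairs $(t, x)$ such that $x\in A_i$ for every $i$ in the support of $t$; equivalently, $B(A)$ is the homotopy colimit over the poset $N(A)$ of the diagram sending each simplex $\sigma$ to the intersection $\bigcap_{i\in\sigma} A_i$. There are two natural projections: $p\colon B(A)\to X$ onto the second factor and $q\colon B(A)\to \abs{N(A)}$ onto the first.

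First I would show that $q$ is a homotopy equivalence. The fiber $q^{-1}(t)$ over a point $t$ in the relative interior of a simplex $\sigma$ equals $\bigcap_{i\in\sigma} A_i$, which is convex, hence contractible. Since $\abs{N(A)}$ has a regular CW structure and the restriction of $q$ over the open star of each simplex deformation retracts onto that fiber, the map $q$ is a homotopy equivalence by the projection lemma for homotopy colimits of contractible pieces (equivalently, by a Vietoris--Begle argument on the CW base).

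Next I would show that $p$ is a homotopy equivalence. For any $x\in X$ the fiber $p^{-1}(x)$ is the geometric realization of the subcomplex $\Set{\sigma\in N(A) \given x\in A_i \text{ for all } i\in\sigma}$, which is precisely the full simplex on the nonempty index set $\Set{i\in I \given x\in A_i}$ and is therefore contractible. To upgrade fiberwise contractibility to a genuine homotopy equivalence I would invoke Dold's theorem, using the closed convex structure of the $A_i$, which makes the cover good in the sense that local homotopy trivializations can be built by straight-line retractions inside each intersection.

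Finally, functoriality is essentially formal: an inclusion of covers as in the hypothesis induces a map $B(A)\to B(A')$ by $(t, x)\mapsto (\iota_*(t), x)$, where $\iota_*\colon \abs{N(A)}\hookrightarrow \abs{N(A')}$ comes from the inclusion of nerves; this map commutes strictly with both projections, yielding the required diagram. The main obstacle is the step of promoting fiberwise contractibility of $p$ to a weak equivalence in the closed setting (where partitions of unity are not directly available): this is precisely where the convexity hypothesis is essential, since closed convex subsets of $\R^d$ are absolute neighborhood retracts, and the chosen retraction-based argument must be performed uniformly across inclusions $A\subset A'$ so that the three vertical maps in the commuting diagram are compatible with the homotopy equivalences on each row.
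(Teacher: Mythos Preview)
The paper does not give its own proof of this statement: it is quoted as \cite[Theorem 3.9]{bauerUnifiedViewFunctorial2023} and used as a black box throughout \cref{sec:proof_sparse_nerve}. So there is nothing in the present paper to compare against.

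That said, your sketch via the blow-up $B(A)\subset \abs{N(A)}\times X$ with its two projections is exactly the construction that underlies the cited reference, and the functoriality argument you give (the induced map $(t,x)\mapsto(\iota_{*}t,x)$ commuting with both projections) is the standard one. In that sense your proposal matches the approach the paper is implicitly relying on.

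The one place where your outline is genuinely soft is the step promoting fiberwise contractibility of $p\colon B(A)\to X$ to an actual homotopy equivalence in the closed setting. Invoking ``Dold's theorem'' here is not quite right as stated: Dold's theorem compares two maps over the same base and needs a numerable cover of the base, which is precisely what you do not have for a closed cover. The cited reference handles this instead by observing that inclusions of closed convex subsets of $\R^{d}$ are cofibrations (have the homotopy extension property), so that the blow-up, which is the homotopy colimit of the diagram $\sigma\mapsto A_{\sigma}$, maps to the ordinary colimit $X$ by a weak equivalence; this is the content behind the more general version stated in the paper as \cref{thm:general_nerve}. Your closing remark about ANRs and uniform retractions is pointing in the right direction, but as written it is an intention rather than an argument.
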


In fact, the above holds in greater generality~\cite[Corollary 5.16, Theorem
5.9]{bauerUnifiedViewFunctorial2023}, as below. For a subset $\sigma\subset I$,
we denote by $A_{\sigma}$ the subset
\begin{equation*}
  A_{\sigma} = \bigcap_{i\in\sigma} A_{i}.
\end{equation*}

\begin{theorem}\label{thm:general_nerve}
  In~\cref{thm:nerve}, we can replace the condition of each $A_{i}$ being
  convex by the following: for each $\sigma\subset I$ the subset $A_{\sigma}$ is
  either empty or contractible, and for each $\sigma\subset\sigma'\subset I$ the
  pair $(A_{\sigma}, A_{\sigma'})$ has the homotopy extension property.
\end{theorem}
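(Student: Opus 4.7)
The plan is to reduce the statement to the cited results \cite[Corollary 5.16, Theorem 5.9]{bauerUnifiedViewFunctorial2023}, since these prove the functorial nerve theorem in exactly this generality. The mathematical content I would unpack is the following: rather than attempting a direct deformation argument on $X$, I would work through the \emph{Mayer--Vietoris blowup complex} $B(A)$, defined as the subspace of $X \times \abs{N(A)}$ consisting of pairs $(x,t)$ where $t$ lies in the geometric realization of some $\sigma \in N(A)$ with $x \in A_{\sigma}$. Both projections $B(A) \to X$ and $B(A) \to \abs{N(A)}$ are natural in the pair $(X, A)$, so functoriality under the inclusion $X \subset X'$ with compatible covers is built in at the level of the construction.

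The projection $B(A) \to \abs{N(A)}$ has, over the open interior of each simplex $\sigma$, the fiber $A_{\sigma}$, which by hypothesis is contractible whenever it is non-empty. To promote these fiberwise equivalences to a global one, I would induct over the skeleta of $N(A)$: attaching a simplex $\sigma$ amounts to replacing the piece of $B(A)$ sitting over $\partial \sigma$ by the piece sitting over $\sigma$, and the homotopy extension property for the pairs $(A_{\sigma}, A_{\sigma'})$ with $\sigma \subsetneq \sigma'$ is exactly what allows this replacement to be performed as a homotopy equivalence rel the already-constructed map. Dually, the projection $B(A) \to X$ can be analyzed by showing that the preimage of a small neighborhood of $x \in X$ deformation retracts onto the realization of the nerve of the local cover $\Set{A_i \given x \in A_i}$, which is a simplex and hence contractible; again HEP is what glues the local pictures.

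The functorial square then follows because each ingredient is a subspace inclusion: $N(A) \hookrightarrow N(A')$ from $I \subset J$, $B(A) \hookrightarrow B(A')$ from $A_i \subset A'_i$, and $X \hookrightarrow X'$. The main obstacle, and the reason I would defer to \cite{bauerUnifiedViewFunctorial2023} rather than reprove this from scratch, is arranging the two homotopy equivalences $B(A) \to X$ and $B(A) \to N(A)$ simultaneously and \emph{coherently} in the cover, so that the displayed square commutes on the nose (not merely up to homotopy); this is precisely the bookkeeping that the HEP hypothesis is designed to control, and is the content of the functorial upgrade in \cite[Section 5]{bauerUnifiedViewFunctorial2023}.
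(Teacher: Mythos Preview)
Your proposal is correct and matches the paper's approach exactly: the paper does not prove this theorem at all but simply cites \cite[Corollary 5.16, Theorem 5.9]{bauerUnifiedViewFunctorial2023}, which is precisely what you propose to do. Your sketch of the blowup-complex argument and the role of the HEP hypothesis goes beyond what the paper itself provides and is a faithful outline of the cited proof.
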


\subsection{The lemma} The proof of the sparse multicover nerve theorem hinges
on the lemma below,~\cref{lem:new_cover}, which abstracts a key step in the
second part of the proof of the multicover nerve theorem given in~\cite[Theorem
4.11]{blumbergStability2ParameterPersistent2024}, which is in turn inspired
by~\cite[Lemma 8]{sheehyMulticoverNerveGeometric2012}.

The lemma relies on Quillen's theorem
A~\cite{quillenHigherAlgebraicKtheory}\cite[Proposition
1.6]{quillenHomotopyPropertiesPoset1978}, which we now review. Recall that the
\deff{order complex} $\order(P)$ of a poset $P$ is the simplicial complex given
by the chains of $P$. Any order preserving map $f\colon P\to Q$ induces a
simplicial map $\order(f)\colon \order(P) \to \order(Q)$ by applying $f$ to the
vertices. A simplicial complex $S$ is homeomorphic to the order complex of its
face poset $\order(S)$ (its barycentric subdivision); a fact we tacitly use.

For a $q$ in a poset $Q$, we write
$(\downset{q}{Q}) \coloneqq\Set{q'\in Q\given q'\leq q}$ for its down set.
\begin{theorem}[Quillen's theorem A]\label{thm:quillen}
  Let $f\colon P\to Q$ be an order preserving map between finite posets. If for every
  $q\in Q$ the complex $\order(f^{-1}(\downset{q}{Q}))$ is contractible, then the induced
  map $\order(f)\colon\order(P)\to\order(Q)$ is a homotopy equivalence.
\end{theorem}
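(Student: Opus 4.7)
My plan is to induct on $|Q|$, applying the gluing lemma for homotopy equivalences to compatible pushout decompositions of $\order(P)$ and $\order(Q)$. The base case $|Q|=1$ is immediate: $\order(Q)$ is a point and $\order(P)=\order(f^{-1}(\downset{q}{Q}))$ is contractible by hypothesis.

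For the inductive step, I would fix a maximal element $q_{0}\in Q$ and set $Q'\coloneqq Q\setminus\{q_{0}\}$, $D\coloneqq\downset{q_{0}}{Q}$, and $D^{\circ}\coloneqq D\setminus\{q_{0}\}$. Then $\order(Q)$ decomposes as the pushout of $\order(Q')$ and $\order(D)$ along $\order(D^{\circ})$, and $\order(P)$ decomposes analogously with the $f$-preimages in place of the pieces of $Q$; the map $\order(f)$ respects this structure. Applying the gluing lemma then reduces the problem to showing that $\order(f)$ is a homotopy equivalence on each of the three corners. The corner $\order(f^{-1}(D))\to\order(D)$ is automatic: both spaces are contractible, the target because $q_{0}$ is a maximum of $D$ (so $\order(D)$ is a cone on $\order(D^{\circ})$) and the source by hypothesis. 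For the other two corners I would invoke the inductive hypothesis on the restrictions $f|_{f^{-1}(Q')}\colon f^{-1}(Q')\to Q'$ and $f|_{f^{-1}(D^{\circ})}\colon f^{-1}(D^{\circ})\to D^{\circ}$.

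The one substantive check is that the contractibility assumption transfers to the restricted maps. Since $q_{0}$ is maximal, $q_{0}\not\leq q$ for every $q\in Q'$, so $\downset{q}{Q'}=\downset{q}{Q}$; and for $q\in D^{\circ}$, anything $\leq q<q_{0}$ is already strictly below $q_{0}$, so $\downset{q}{D^{\circ}}=\downset{q}{Q}$. In either case the preimage under the restricted map coincides with $f^{-1}(\downset{q}{Q})$, whose order complex is contractible by the original hypothesis. All inclusions involved are inclusions of subcomplexes of simplicial complexes and hence cofibrations, so I do not anticipate any genuine topological obstacle; the remaining work is the routine combinatorial verification that the two pushout decompositions actually exhaust the complexes with the claimed intersection, which reduces to the fact that every chain has a unique maximum together with $f$ being order preserving. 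An alternative nerve-theoretic approach based on the cover $\{\order(f^{-1}(\downset{q}{Q}))\}_{q\in Q}$ of $\order(P)$ is conceivable, but intersections of this cover along non-chains of $Q$ lack a uniformly contractible description, so the inductive pushout route appears cleaner.
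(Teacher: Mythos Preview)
The paper does not prove this statement; it merely cites it as Quillen's theorem A from~\cite{quillenHigherAlgebraicKtheory,quillenHomotopyPropertiesPoset1978} and uses it as a black box in the proof of \cref{lem:new_cover}. So there is no paper proof to compare against.

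Your inductive argument is sound. The pushout decomposition of $\order(Q)$ along a maximal $q_{0}$ is correct (every chain either avoids $q_{0}$ and lies in $Q'$, or has $q_{0}$ as its top and lies in $D$), and the matching decomposition of $\order(P)$ goes through because $f$ is order preserving: if some $f(p_{i})=q_{0}$ then $f(p_{j})\leq q_{0}$ for $j\leq i$ and $f(p_{j})=q_{0}$ for $j\geq i$, so the whole chain lands in $f^{-1}(D)$. Your verification that the hypothesis transfers to the restricted maps is exactly right, since for $q\in Q'$ or $q\in D^{\circ}$ the downset in the subposet coincides with the downset in $Q$, and the preimage under the restricted map equals the full preimage $f^{-1}(\downset{q}{Q})$. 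Subcomplex inclusions are closed cofibrations, so the gluing lemma applies and yields the homotopy equivalence on the pushout. The only minor point to flag is that you should also treat the case $D^{\circ}=\varnothing$ (i.e.\ $q_{0}$ minimal) separately in the induction, but this is harmless since then $\order(D^{\circ})$ and $\order(f^{-1}(D^{\circ}))$ are both empty.

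By way of comparison, Quillen's original argument works for arbitrary small categories via a bisimplicial resolution and is not inductive; your approach is more elementary and tailored to finite posets, which is all that is needed here.
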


A subposet $P\subset N(A)$ of the (face poset of the) nerve of a cover
$A = (A_{i})_{i\in I}$ induces a cover $A^{P} = (A_{\sigma})_{\sigma\in P}$.
The nerve of this cover $N(A^{P})$ is related to $P$:

\begin{lemma}[Subposet nerve lemma]\label{lem:new_cover}
  Let $A = (A_{i})_{i\in I}$ be a finite cover. Let $P\subset N(A)$ be a
  subposet whose induced nerve $N(A^{P})$ satisfies that for every
  $\Set{\sigma_{1}, \dots, \sigma_{m}}\in N(A^{P})$ one has
  $\sigma_{1}\cup\dots\cup\sigma_{m}\in P$.

  There is an order-preserving
  map $F\colon N(A^{P})\to P$ that induces a homotopy equivalence
  $\order(F)\colon \order(N(A^{P})) \to \order(P)$.
\end{lemma}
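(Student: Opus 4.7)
The plan is to take the natural candidate $F(\tau) \coloneqq \bigcup_{\sigma\in\tau}\sigma$, verify it is order preserving and well defined, and then apply Quillen's theorem A (\cref{thm:quillen}). The assumption on $P$ says exactly that $F(\tau)\in P$ for every $\tau\in N(A^{P})$, so $F$ is well defined as a map into $P$; since $\tau\subset\tau'$ implies $\bigcup\tau\subset\bigcup\tau'$, it is order preserving.

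By~\cref{thm:quillen}, it suffices to show that the preimage $K_q \coloneqq F^{-1}(\downset{q}{P})$ has contractible order complex for every $q\in P$. Unwinding the definitions, $K_q$ consists of those simplices $\tau = \{\sigma_1,\dots,\sigma_m\}\in N(A^{P})$ with $\sigma_1\cup\cdots\cup\sigma_m\subset q$.

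The key step is to exhibit a contraction of $K_q$ to the element $\tau^\star\coloneqq\{q\}$ via the map $\phi\colon K_q\to K_q$, $\phi(\tau)\coloneqq \tau\cup\{q\}$. To check that $\phi$ lands in $K_q$, note that every $\sigma\in\tau$ satisfies $\sigma\subset q$ and hence $A_\sigma\supset A_q$, so $A_q\cap\bigcap_{\sigma\in\tau}A_\sigma = A_q$, which is non-empty because $q\in N(A)$; moreover $\bigcup(\tau\cup\{q\})=q$, so $\phi(\tau)$ does lie in $K_q$. The map $\phi$ is order preserving, and the pointwise inequalities $\tau\leq\phi(\tau)\geq\tau^\star$ show, by the standard fact that two order-preserving maps satisfying $f(p)\leq g(p)$ pointwise induce homotopic maps of order complexes, that the identity on $\order(K_q)$ is homotopic to the constant map at $\tau^\star$. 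Hence $\order(K_q)$ is contractible, and Quillen's theorem A yields the desired homotopy equivalence $\order(F)\colon\order(N(A^{P}))\to\order(P)$.

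I do not foresee a significant obstacle here: the whole argument is the standard trick of exhibiting a common upper bound for the identity and a constant map, once one realizes that the hypothesis on $P$ is exactly what makes $F$ well defined. The one subtlety to watch is that the augmented simplex $\tau\cup\{q\}$ must remain inside $N(A^{P})$ and not merely the power set of $P$; this is where the hypothesis $q\in P\subset N(A)$ enters, through $A_q\neq\varnothing$ and the monotonicity $\sigma\subset q \Rightarrow A_\sigma\supset A_q$.
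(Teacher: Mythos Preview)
Your argument is correct and follows the same route as the paper: define $F(\tau)=\bigcup_{\sigma\in\tau}\sigma$ and verify the hypothesis of Quillen's theorem~A. The only difference is in how contractibility of $K_q=F^{-1}(\downset{q}{P})$ is shown: the paper observes that $K_q$ has a \emph{maximum} element, namely the full down set $\downset{q}{P}$ itself (since every $\sigma\in\tau$ satisfies $\sigma\subset q$, hence $\sigma\in\downset{q}{P}$, so $\tau\subset\downset{q}{P}$), whereas you use the closure operator $\phi(\tau)=\tau\cup\{q\}$ and the zigzag $\mathrm{id}\leq\phi\geq\mathrm{const}_{\{q\}}$; both are standard one-line poset contractibility arguments and either suffices.
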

\begin{proof}
  Define the order-preserving map $F\colon N(A^{P})\to P$ by
  $F(\Set{\sigma_{1}, \dots, \sigma_{m}}) = \sigma_{1}\cup\dots\cup\sigma_{m}$.
  We claim that for every $\sigma\in P$ the complex
  $\order(F^{-1}(\downset{\sigma}{P}))$ is contractible, and so the result
  follows from Quillen's theorem A,~\cref{thm:quillen}. The claim follows from
  the observation that $F^{-1}(\downset{\sigma}{P})$ has a maximum, using the
  standard fact that if a poset has a maximum then its order complex is
  contractible. Indeed, the maximum is
  $(\downset{\sigma}{P}) \in F^{-1}(\downset{\sigma}{P})$.

  To check this, first note that $(\downset{\sigma}{P})$ is in
  $F^{-1}(\downset{\sigma}{P})$ because
  $\varnothing\neq A_{\sigma}\subset A_{\sigma'}$ for every
  $\sigma' \in(\downset{\sigma}{P})$ and thus
  $A_{\sigma} \subset \bigcap_{\sigma'\in (\downset{\sigma}{P})}A_{\sigma'}$, so
  $(\downset{\sigma}{P}) \in N(A^{P})$. In addition, suppose that
  $\Set{\sigma_{1}, \dots, \sigma_{m}} \in F^{-1}(\downset{\sigma}{P})$, which
  means that $\sigma_{1}\cup\dots\cup\sigma_{m} \subset \sigma'$ for some
  $\sigma'\in(\downset{\sigma}{P})$ and, in particular,
  $\sigma_{j}\subset\sigma$ for every $j\in\Set{1,\dots, m}$. Therefore
  $\Set{\sigma_{1}, \dots, \sigma_{m}} \subset (\downset{\sigma}{P})$ and we
  conclude that $\downset{\sigma}{P}$ is the maximum, as desired.
\end{proof}

Combining the above with the nerve theorem we obtain:

\begin{lemma}\label{lem:func_new_cover}
  Let $A = (A_{i})_{i\in I}$ and $A' = (A'_{j})_{j\in J}$ be two finite covers
  with $I\subset J$ and $A_{i}\subset A'_{i}$ for every $i\in I$. If
  $P\subset P'$ are subposets $P\subset N(A)$ and $P'\subset N(A')$ that satisfy
  the conditions of~\cref{lem:new_cover} and if $A$ and $A'$ satisfy the
  conditions of the nerve theorem (either~\cref{thm:nerve} or~\cref{thm:general_nerve}), then we have a commutative diagram
\begin{equation*}
  \begin{tikzcd}
    \bigcup_{\sigma\in P} A_{\sigma} \ar[hook]{d} & \lar B(A^{P}) \rar \ar[hook]{d} & \order(P) \ar[hook]{d}\\
    \bigcup_{\sigma\in P'} A'_{\sigma} & \lar B(A'^{P'}) \rar & \order(P'),
  \end{tikzcd}
\end{equation*}
where the horizontal arrows are homotopy equivalences.
\end{lemma}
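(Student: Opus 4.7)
The plan is to apply the functorial nerve theorem to the induced covers $A^{P} = (A_{\sigma})_{\sigma \in P}$ and $A'^{P'} = (A'_{\sigma})_{\sigma \in P'}$, and then postcompose the rightmost column with the maps $\order(F)$ and $\order(F')$ provided by~\cref{lem:new_cover}, checking naturality at each step.

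First, I would verify that the pair $(A^{P}, A'^{P'})$ satisfies the hypotheses of the functorial nerve theorem. The index inclusion $P \subset P'$ is given, and $A_{\sigma} \subset A'_{\sigma}$ for every $\sigma \in P$ because $\sigma \subset I$ and $A_{i} \subset A'_{i}$. In the convex setting of~\cref{thm:nerve}, each $A_{\sigma}$ is closed and convex as an intersection of closed convex sets. In the general setting of~\cref{thm:general_nerve}, an intersection $\bigcap_{\sigma \in \tau} A_{\sigma}$ collapses to $A_{\bigcup_{\sigma \in \tau} \sigma}$, so contractibility and the homotopy extension property for such intersections are inherited from the original cover $A$ (and analogously for $A'$).

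Second, I would invoke the functorial nerve theorem to obtain a commutative diagram with vertical inclusions and horizontal homotopy equivalences relating $\bigcup_{\sigma \in P} A_{\sigma}$, $B(A^{P})$, and $N(A^{P})$ to their primed counterparts. Then I would postcompose the rightmost column with $\order(F)\colon \order(N(A^{P})) \to \order(P)$ and $\order(F')\colon \order(N(A'^{P'})) \to \order(P')$ from~\cref{lem:new_cover}. The key naturality check is that the square
\begin{equation*}
  \begin{tikzcd}
    N(A^{P}) \rar{F} \dar[hook] & P \dar[hook]\\
    N(A'^{P'}) \rar{F'} & P'
  \end{tikzcd}
\end{equation*}
commutes on the nose: a simplex $\Set{\sigma_{1}, \dots, \sigma_{m}} \in N(A^{P})$, viewed inside $N(A'^{P'})$ through $P \subset P'$, is sent by both routes to $\sigma_{1} \cup \dots \cup \sigma_{m}$. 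Taking order complexes preserves this commutativity, yielding the desired diagram.

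The main obstacle is simply the bookkeeping: confirming that the induced covers $A^{P}, A'^{P'}$ land in the scope of the nerve theorem (easy in the convex case, and a short argument via $\bigcap_{\sigma \in \tau} A_{\sigma} = A_{\bigcup \sigma}$ in the general case) and that the inclusion $N(A^{P}) \hookrightarrow N(A'^{P'})$ commutes with the union map $F$. Both are essentially formal, since intersections of the $A_{i}$'s are themselves of the same form and the union map is manifestly functorial in the index set.
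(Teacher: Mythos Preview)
Your proposal is correct and follows essentially the same route as the paper: verify that the induced covers $A^{P}$ and $A'^{P'}$ inherit the nerve-theorem hypotheses via the identity $\bigcap_{\sigma\in\tau} A_{\sigma} = A_{\bigcup\tau}$, apply the functorial nerve theorem to obtain the leftmost squares, and then postcompose with $\order(F)$ and $\order(F')$, checking that the rightmost square commutes directly from the definition $F(\Set{\sigma_{1},\dots,\sigma_{m}}) = \sigma_{1}\cup\cdots\cup\sigma_{m}$. The paper's proof is essentially a terser version of exactly this argument.
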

\begin{proof}
  If the cover $A$ satisfies the conditions of the nerve theorem, so does the
  cover $A^{P}$, because
  $\bigcap_{\sigma_{i}\in\Set{\sigma_{1}, \dots, \sigma_{m}}} A_{\sigma_{i}} = A_{\sigma_{1}\cup\cdots\cup\sigma_{m}}$.
  So the nerve theorem and~\cref{lem:new_cover} give a
  diagram
\begin{equation*}
  \begin{tikzcd}
    \bigcup_{\sigma\in P} A_{\sigma} \ar[hook]{d} & \lar B(A^{P}) \rar \ar[hook]{d} & \order(N(A^{P})) \rar{F} \ar[hook]{d} & \order(P) \ar[hook]{d}\\
    \bigcup_{\sigma\in P'} A'_{\sigma} & \lar B(A'^{P'}) \rar & \order(N(A'^{P'})) \rar{F'} & \order(P'),
  \end{tikzcd}
\end{equation*}
where the horizontal arrows are homotopy equivalences and the right square can
be seen to be commutative by the definition of $F$ in the proof
of~\cref{lem:new_cover}. Thus, the whole diagram is commutative because the two
leftmost squares commute, by~\cref{thm:nerve}.
\end{proof}

\subsection{Proving the theorem}
The rest of the section is dedicated to proving the sparse multicover nerve
theorem.
\firstsparsenerve*

The strategy is to apply~\cref{lem:new_cover} to a cover of~\emph{cones}, as in
the work of Cavanna, Jahanseir and
Sheehy~\cite{cavannaGeometricPerspectiveSparse2015}.
For a fixed $r\geq 0$, let $C = (C_{x})_{x\in X}$ be the cover given by the
cones of the sparse balls, that is,
\begin{equation*}
  C_{x} \coloneqq \bigcup_{0\leq s \leq r} \SB(x, s) \times\Set{s},
\end{equation*}
a subset of $\R^{d}\times \R$. We now suppose that the radius function
$\rho_{x}$ for the sparse balls is such that the cone $C_{x}$ is
closed and convex. This is the case if the chosen radius function $\rho_{x}$ is concave,
by~\cite[Proposition 4]{cavannaGeometricPerspectiveSparse2015}; one can choose, for example,
\begin{equation*}
  \rho_{x}(r) = \begin{cases}
    r, & r < \slow(x),\\
    U_{x}(r) = \frac{1}{3(1+\epsilon)} r +  \frac{2 + 3\epsilon}{3(1+\epsilon)}\slow(x), & r \geq \slow(x).
  \end{cases}
\end{equation*}
We leave the general case for later.

The nerve of this cover, $N(C)$, is equal to $\SCech(r,1)$:
\begin{equation*}
  N(C) = \SCech(r,1) = \bigcup_{s\leq r} \Set*{\sigma\subset X \given \bigcap_{x\in\sigma} \SB(x, s) \neq \varnothing}.
\end{equation*}
Now fix a $k\in\N$ and let $P \coloneqq \SCech(r,k)$, noting that
$P \subset N(C) = \SCech(r,1)$, and consider $N(C^{P})$. In order to
apply~\cref{lem:func_new_cover} we have the following:

\begin{lemma}
  If $\Set{\sigma_{1}, \dots, \sigma_{m}}\in N(C^{P})$ then
  $\sigma_{1}\cup\dots\cup\sigma_{m} \in P$.
\end{lemma}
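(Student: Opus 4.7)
The plan is to unfold the hypothesis $\{\sigma_1, \dots, \sigma_m\} \in N(C^P)$ into a single scale $s^* \leq r$ with $\bigcap_{x \in \sigma}\SB(x, s^*) \neq \varnothing$, writing $\sigma \coloneqq \sigma_1 \cup \dots \cup \sigma_m$, and then to promote $s^*$ to a scale that also witnesses the covering-weight condition required for $\sigma \in \SCech(r,k)$; this promotion will draw on the hypothesis $\sigma_i \in P$ for one carefully chosen index $i$.

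First, I would record a structural fact about the set of admissible scales. Since each $\rho_x$ is strictly increasing on $[0, (1+3\epsilon)\slow(x)]$, the sparse ball $\SB(x, s)$ grows monotonically in $s$ until it vanishes; consequently, the set of scales $s$ with $\bigcap_{x \in \sigma}\SB(x, s) \neq \varnothing$ is an interval $[u, d)$, where $d \coloneqq \min_{x \in \sigma}(1+3\epsilon)\slow(x)$ is the first disappearance time in $\sigma$ and $u \leq s^* \leq r$. Writing $d_i \coloneqq \min_{x \in \sigma_i}(1+3\epsilon)\slow(x)$, the equality $\sigma = \bigcup_i \sigma_i$ gives $d = \min_i d_i$, so I would fix an index $i$ realising $d = d_i$.

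Next, I would establish that the covering weight is monotone on the relevant range: for each $x \in \sigma_i$, the function $c_s(x)$ is non-decreasing in $s$ on $[0, d_i)$. The reason is that if $\Cont_s(z) = x$ and $x$ is still alive at $s' > s$, then the earlier entries of $z$'s covering sequence remain dead at $s'$ and so $\Cont_{s'}(z) = x$ too; meanwhile, the disappearance of other balls in $(s, s')$ can only add points to $\Cont_{s'}^{-1}(x)$. Summing over $\sigma_i$ preserves monotonicity on $[0, d)$.

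Finally, using $\sigma_i \in P$ I would pick $t_i \leq r$ with $\bigcap_{x \in \sigma_i}\SB(x, t_i) \neq \varnothing$ (which forces $t_i < d_i = d$) and $\sum_{x \in \sigma_i} c_{t_i}(x) \geq k$. Setting $s \coloneqq \max(u, t_i) \in [u, d) \cap [0, r]$, the scale $s$ satisfies $\bigcap_{x \in \sigma}\SB(x, s) \neq \varnothing$ by construction of $[u, d)$ and $\sum_{x \in \sigma} c_s(x) \geq \sum_{x \in \sigma_i} c_{t_i}(x) \geq k$ by monotonicity, certifying $\sigma \in \SCech(r,k) = P$. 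The main obstacle—and the reason the choice $d = d_i$ matters—is that for a generic $i$ the witness $t_i$ could exceed $d$ (when the first ball in $\sigma$ to disappear sits outside $\sigma_i$), and then $\sum_{x \in \sigma_i} c_s(x)$ might never reach $k$ anywhere in the window $[u, d)$; forcing $d = d_i$ precludes this.
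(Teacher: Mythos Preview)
Your proof is correct and follows essentially the same strategy as the paper's: both select a $\sigma_i$ containing a point of minimal disappearance time in $\tau = \sigma_1 \cup \cdots \cup \sigma_m$, observe that any witness scale $t_i$ for $\sigma_i \in P$ is bounded by that disappearance time, and then invoke the monotonicity of covering weights while all balls of $\sigma_i$ remain alive (which the paper leaves implicit and you spell out). One small correction: the interval of admissible scales is closed at the right, i.e.\ $[u,d]$ rather than $[u,d)$, since $\SB(y,s)\neq\varnothing$ precisely for $s \leq (1+3\epsilon)\slow(y)$; correspondingly $t_i \leq d$ rather than $t_i < d$, but this does not affect your argument.
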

\begin{proof}
  Consider $\Set{\sigma_{1}, \dots, \sigma_{m}}\in N(C^{P})$ and let
  $\tau\coloneqq\sigma_{1}\cup\dots\cup\sigma_{m}$. To certify that $\tau\in P$
  we need to find a $\gamma\leq r$ with $\bigcap_{x\in\tau}\SB(x, \gamma) \neq \varnothing$ and
  $\sum_{x\in\tau} c_{\gamma}(x) \geq k$.

  Let $\gamma$ be the maximum $\gamma\leq r$ such that
  $\bigcap_{x\in\tau}\SB(x, \gamma)\neq \varnothing$. This maximum is
  well-defined precisely because
  $\Set{\sigma_{1}, \dots, \sigma_{m}}\in N(C^{P})$, which implies that
  $\bigcap_{x\in\tau}\SB(x, s)\neq\varnothing$ for at least one $s\leq r$.

  If $\gamma = r$ then for any
  $\sigma_{i}\in\Set{\sigma_{1}, \dots, \sigma_{m}}$ we have that
  $\sum_{x\in\sigma_{i}} c_{\gamma}(x) \geq k$ because
  $\sigma_{i}\in P$. It follows that
  $\sum_{x\in\tau} c_{\gamma}(x) \geq \sum_{x\in\sigma_{i}}c_{\gamma}(x) \geq k$,
  as desired.

  If $\gamma < r$ then the sparse ball of a point in $\tau$ disappears precisely
  at $\gamma$, meaning that there is a $y\in\tau$ such that $(1+3\epsilon)\slow(y) = \gamma$. Let
  $\sigma_{i}\in\Set{\sigma_{1}, \dots, \sigma_{m}}$ be such that
  $y\in\sigma_{i}$. Then, from $\sigma_{i}\in P = \SCech(r,k)$, we know that
  $\sum_{x\in\sigma_{i}} c_{\gamma}(x) \geq k$ and therefore
  $\sum_{x\in\tau} c_{\gamma}(x) \geq \sum_{x\in\sigma_{i}}c_{\gamma}(x) \geq k$,
  as desired.
\end{proof}

Note that $\SCov(r,k)\times\Set{r} \subset \bigcup_{\sigma\in P} C_{\sigma}$
and, in fact, such an inclusion is a homotopy equivalence, which follows from
$\bigcup_{\sigma\in P} C_{\sigma}$ having $\SCov(r,k)\times\Set{r}$ as a
deformation retract, by the homotopy that slides each
$(p, s)\in\bigcup_{\sigma\in P} C_{\sigma}$ to
$(p, r) \in \SCov(r,k)\times\Set{r}$ in a straight line. Note that the slice of
a cone, meaning $C_{\sigma} \cap (\R^{d}\times\Set{s})$ for a $s\leq r$, might
turn empty---$C_{\sigma} \cap (\R^{d}\times\Set{s}) \neq\varnothing$ and
$C_{\sigma} \cap (\R^{d}\times\Set{s'}) = \varnothing$ for some
$s\leq s'$---because sparse balls disappear; but~\cref{prop:filtration} guarantees that the homotopy is well-defined.

Then we can apply~\cref{lem:func_new_cover} and obtain that the bifiltrations
$\SSub$ and $\SCov$ are weakly equivalent.

\subparagraph{The general case.} We have assumed that the cones $C_{x}$ are
closed and convex. This is required by the conditions of the nerve
theorem,~\cref{thm:nerve}. We now drop this assumption by considering a cover
based on \deff{telescopes}, to which the more general nerve
theorem,~\cref{thm:general_nerve}, applies.

We first order the disappearing times
$D = \Set{(1+3\epsilon) \slow(x) \given x \in X}$ of the points in $X$,
obtaining $d_{1} < \cdots < d_{m + 1}$; note that $d_{m+1} = \infty$ because
there is one, and only one, point with $\slow(x) = \infty$. To handle
$d_{m+1} = \infty$ below, we also define the
sequence $d'_{1} < \cdots < d'_{m+1}$ by $d'_{i} \coloneqq d_{i}$ if $i \leq m$, and
$d'_{m+1} \coloneqq d_{m} + 1$.

We fix $r\geq 0$ and let $x\in X$ be a point. Let $d_{k} \in D$ be such that
$d_{k} = (1+3\epsilon)\slow(x)$. If $r > d_{k}$, the \deff{telescope} $T_{x}$
for $x\in X$ is
\begin{equation*}
  T_{x} \coloneqq \left(\bigcup_{j< k} \SB(x, d'_{j})\times [d'_{j}, d'_{j+1}]\right)\cup\SB(x, d'_{k})\times\Set{d'_{k}},
\end{equation*}
otherwise, if $r \leq d_{k}$, then let $d_{i}$ be the minimum disappearing time
$d_{i}\in D$ such that $r\leq d_{i}$ and define
\begin{equation*}
  T_{x} \coloneqq \left(\bigcup_{j < i}\SB(x, d'_{j})\times[d'_{j}, d'_{j+1}]\right) \cup \SB(x, r)\times \Set{d'_{i}},
\end{equation*}
see~\cref{fig:telescopes}. Note that, now writing $T_{x}(r)$ for the telescope
of a fixed $r\geq 0$, if $r\leq r'$ then we have an inclusion
$T_{x}(r)\subset T_{x}(r')$.
\begin{figure}[h]
  \centering
  \includegraphics{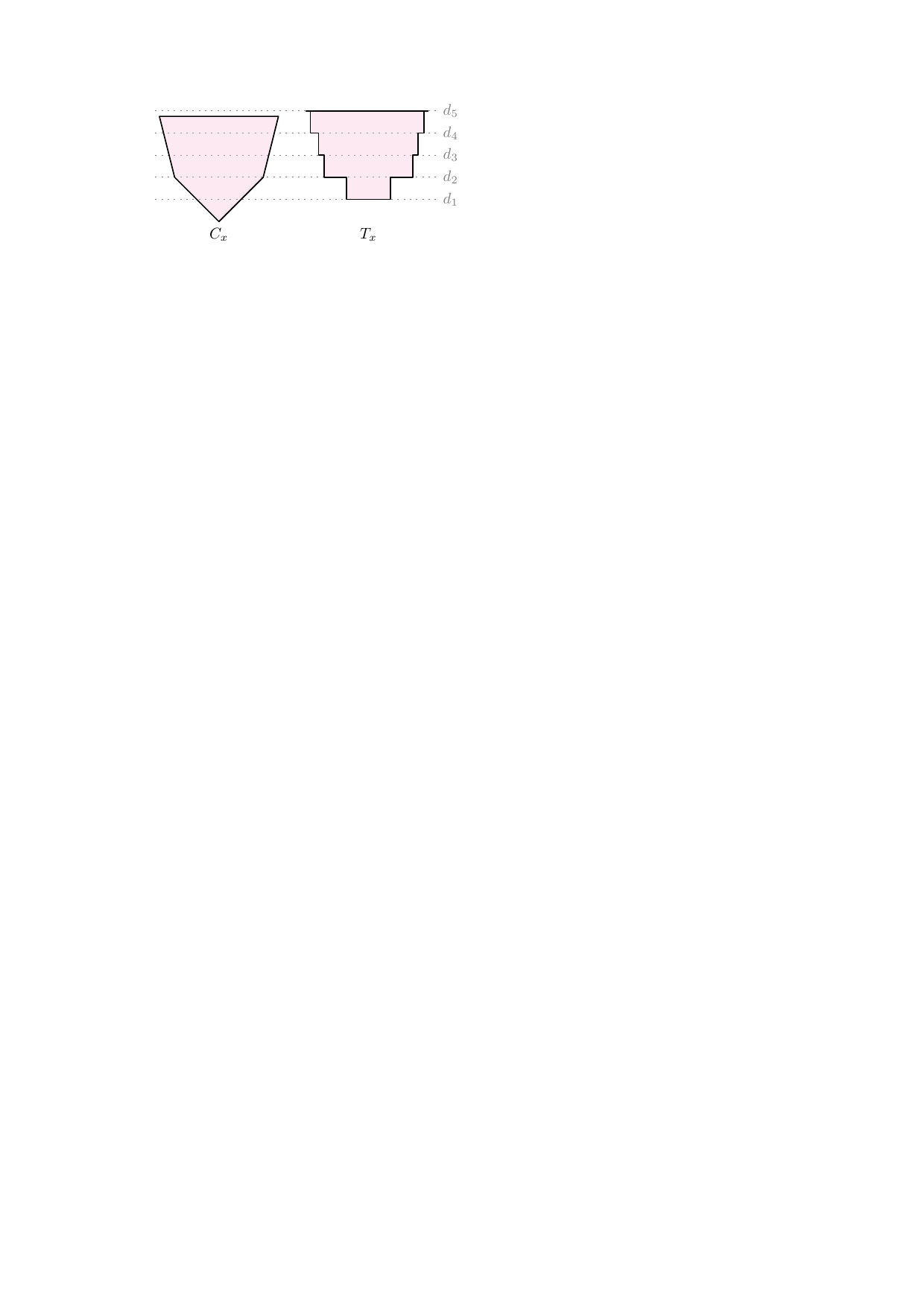}
  \caption{Comparison of a cone $C_{x}\subset\R\times\R$ (in this case convex, but it might
    not be), and a telescope $T_{x}$.}\label{fig:telescopes}
\end{figure}

Each $T_{x}$ is a closed subset of $\R^{d}\times\R$, being the finite union of closed
sets, but not convex. Still, the cover $T = (T_{x})_{x\in X}$ has the
same nerve as the cones $C = (C_{x})_{x\in X}$:
\begin{lemma}
  $N(T) = N(C)$.
\end{lemma}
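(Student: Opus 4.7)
The plan is to prove $N(T) = N(C)$ by establishing both set-theoretic inclusions slice by slice in the time direction. The key structural observation I will exploit is that each telescope $T_{x}$ is a staircase approximation of the corresponding cone $C_{x}$: within each strip $\R^{d}\times[d'_{j}, d'_{j+1}]$ of the universal grid $D' = \{d'_{1}, \dots, d'_{m+1}\}$, the slice of the cone at height $s$ is the growing ball $\SB(x, s)$, while the slice of the telescope is the constant, smaller ball $\SB(x, d'_{j})$. Because the grid $D'$ is the same for every $x$, all telescopes snap simultaneously at the same breakpoints, which is what will let a single common point of a telescope intersection descend to a common point of a cone intersection, and vice versa.

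For the forward inclusion $N(T)\subseteq N(C)$, I would take a common witness $(p, s')\in\bigcap_{x\in\sigma}T_{x}$ and locate $s'$ in some strip $[d'_{j}, d'_{j+1}]$. From the definition of $T_{x}$, this forces $p\in\SB(x, d'_{j})$ for every $x\in\sigma$, and the upper bound on the indices $j$ permitted in the main union of each telescope yields $d'_{j}\leq r$, so $(p, d'_{j})\in\bigcap_{x\in\sigma}C_{x}$ and $\sigma\in N(C)$. The isolated extra slice in the definition of $T_{x}$ (at $d'_{\kappa}$ in Case A or at $d'_{\iota}$ in Case B) is handled by an analogous argument, falling back to the slice at scale $r$ of $C_{x}$ and using that the telescope at that slice contains $\SB(x, r)$.

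For the reverse inclusion $N(C)\subseteq N(T)$, given $(p, s^{*})\in\bigcap_{x\in\sigma}C_{x}$, I would round $s^{*}$ up to $s'\coloneqq\min\{d\in D'\;:\; d\geq s^{*}\}$. The crucial fact, built into the construction of $D'$, is that every finite disappearing time $(1+3\epsilon)\slow(x)$ is itself a member of $D'$. Since $\SB(x, s^{*})$ being non-empty requires $s^{*}\leq (1+3\epsilon)\slow(x)$, the snapped-up $s'$ still satisfies $s'\leq (1+3\epsilon)\slow(x)$, so each sparse ball of $\sigma$ is still alive at scale $s'$. Monotonicity of $\rho_{x}$ on its live interval then gives $p\in\SB(x, s^{*})\subseteq\SB(x, s')$, and since the telescope $T_{x}$ contains $\SB(x, s')$ at slice $s' = d'_{j}$, I conclude $(p, s')\in\bigcap_{x\in\sigma}T_{x}$.

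The step I expect to be the main obstacle is the bookkeeping around the singular extra slices in the telescope definition and around the boundary condition $s'\leq r$: one must verify that the snap-up always lands inside the valid time range of every $T_{x}$ simultaneously, and that different $x\in\sigma$ being in Case A or Case B does not create inconsistencies. Both checks reduce to the universality of the grid $D'$ and to the fact that sparse balls grow monotonically exactly up to their disappearing time.
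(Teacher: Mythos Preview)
Your proposal is correct and follows essentially the same approach as the paper: snap a common time coordinate to the universal grid $D'$ (downward for $N(T)\subset N(C)$ to extract a scale $s'\leq r$ at which all sparse balls intersect, and upward for $N(C)\subset N(T)$ using that no disappearing times lie strictly between $s$ and the next grid point), exploiting that the grid is the same for every $x$. The paper's proof is terser but makes the same moves; your anticipated bookkeeping around the extra slice at $d'_\iota$ in Case~B is exactly the one case distinction needed, and it resolves as you indicate because there the telescope slice is $\SB(x,r)\supset\SB(x,s^{*})$.
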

\begin{proof}
  If $\sigma\in N(T)$ then $\bigcap_{x\in\sigma} \SB(x, s') \neq \varnothing$ for a $s' \leq r$ and thus $\sigma\in N(C)$.

  Conversely, suppose that $\sigma\in N(C)$, meaning that
  $\bigcap_{x\in\sigma} C_{x}\neq \varnothing$ and thus there exists a
  $s\leq r$ such that $\bigcap_{x\in\sigma} \SB(x, s)\neq\varnothing$. Let
  $d_{i}$ be the minimum disappearing time $d_{i}\in D$ such that $s\leq d_{i}$.
  Then, by construction, there is no point $y\in X$ with disappearing time
  $(1+3\epsilon)\slow(y)$ in $[s, d_{i})$; this implies that if $\SB(x,s)\neq\varnothing$ then $\SB(x, d'_{i})\neq\varnothing$. Thus,
  $\SB(x, s)\times\Set{d'_{i}} \subset T_{x}$ and then
  $\bigcap_{x\in\sigma} T_{x}\neq\varnothing$, as desired.
\end{proof}

In fact, $N(T^{P}) = N(C^{P})$ too, for a $P$ defined as before. In addition,
the cover $T = (T_{x})_{x\in X}$ satisfies the conditions of the more general
nerve theorem,~\cref{thm:general_nerve}. Namely, for each $\sigma\subset X$ the
subset $T_{\sigma}$ is either empty or contractible, and for each
$\sigma\subset\sigma'\subset X$ the pair $(T_{\sigma}, T_{\sigma'})$ has the
homotopy extension property, noting that any two closed and convex subsets
$K'\subset K$ of $\R^{d}$ have the homotopy extension property~\cite[Proposition
5.19]{bauerUnifiedViewFunctorial2023}.

The union $\bigcup_{\sigma\in P} T_{\sigma}$ contains $\SCov(r,k)$ as a
deformation retract, for the same reasons as for the union of the cones. Thus,
using the same argument as before, we have that $\SSub$ and $\SCov$ are
weakly equivalent by~\cref{lem:func_new_cover}.

\section{Sparse subdivision Rips}\label{sec:rips}

Although we have been focusing on $\R^{d}$, the sparsification scheme works for
general metric spaces and the Rips complex, as in the single-parameter
case~\cite{cavannaGeometricPerspectiveSparse2015}. Recall that, fixing a (finite)
metric space $(X, \partial)$, the \deff{Rips complex} $\Rips(r)$ of $X$ for a radius
$r\geq 0$ is given by
\begin{equation*}
  \Rips(r) \coloneqq \Set{\sigma\subset X \given \partial(x, x') \leq 2r \text{ for all $x,x'\in \sigma$}}.
\end{equation*}
When $X$ is a subset of $\R^{d}$ under the $l^{\infty}$ norm, then the Rips
complex and the \v{C}ech complex of $X$ are equal.

As explained in the work of Cavanna, Jahanseir and
Sheehy~\cite{cavannaGeometricPerspectiveSparse2015}, their construction for the
\v{C}ech complex and union of balls in $\R^{d}$ is independent of the chosen
norm; therefore, by isometrically embedding the finite metric space $X$ in
$\R^{\abs{X}}$ under the $l^{\infty}$ norm (cf.~\cite[Example
3.5.3]{buragoCourseMetricGeometry2001}) and observing again that in that setting
the Rips complex and \v{C}ech complex are equal, the construction also applies
to the Rips complex. Moreover, assuming further that the doubling dimension of
$X$ is constant (see below), they show that their construction is of linear size
and can be computed in $O(\abs{X}\log\Delta)$ time, where $\Delta$ is the
spread---also in the Rips setting.

For the same reasons, the sparse bifiltrations defined here also extend to the
Rips setting. Recall that we have been working with an arbitrary norm from the
start. Embedding the finite metric space $X$ in $(\R^{\abs{X}}, l^{\infty})$
yields a construction for the \deff{subdivision Rips
bifiltration}~\cite{sheehyMulticoverNerveGeometric2012} (equal in this setting to
the subdivision \v{C}ech bifiltration), which only depends on the pairwise
distances of $X$.

The size bounds of~\cref{sec:size} also apply, assuming constant
doubling dimension. Recall that the \deff{doubling dimension} of a metric space
$X = (X, \partial)$ is the minimum value $d$ such that every ball
$B(x, r) \coloneqq \Set{y\in X \given \partial(x, y) \leq r}$ in $X$ can be
covered by $2^{d}$ balls of half the radius. The following packing property,
see~\cite[Lemma 2]{smidWeakGapProperty2009} for a proof, allows to extend the
packing arguments of~\cref{sec:size} to metric spaces of constant doubling
dimension.
\begin{lemma}
  Suppose that $X$ has doubling dimension $d$. Any subset $S$ that is an
  $r$-packing and contained in a ball of radius $R$ has
  $\abs{S}\leq\left(\frac{4R}{r}\right)^{d}$.
\end{lemma}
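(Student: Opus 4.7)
The plan is to iterate the doubling property a carefully chosen number of times to cover the containing ball of radius $R$ by small balls, each too small to contain more than one point of the $r$-packing, and then count.

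First, I would apply the doubling property repeatedly. By definition, any ball of radius $\rho$ in $X$ can be covered by $2^{d}$ balls of radius $\rho/2$. Iterating this $k$ times, a ball of radius $R$ can be covered by $2^{dk}$ balls of radius $R/2^{k}$, where $k$ is a positive integer I will fix momentarily.

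Next, I would choose $k$ to be the smallest positive integer with $2^{k} > 2R/r$, so that $R/2^{k} < r/2$. By minimality, $2^{k-1} \leq 2R/r$, hence $2^{k} \leq 4R/r$ and therefore $2^{dk} \leq (4R/r)^{d}$. Each covering ball then has diameter strictly less than $r$, so it can contain at most one point of $S$: any two points of an $r$-packing are at distance at least $r$, contradicting being contained in a ball of diameter smaller than $r$.

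Combining the two observations, since $S$ is contained in the ball of radius $R$, it is contained in the union of the $2^{dk}$ small balls, and each contributes at most one element. Hence $\abs{S}\leq 2^{dk}\leq (4R/r)^{d}$, as desired. There is no substantive obstacle here; the only thing to be careful about is the strict-versus-non-strict inequality used when ruling out two packing points inside the same small ball, which is why I pick $k$ to make $R/2^{k}$ strictly below $r/2$ rather than merely at most $r/2$.
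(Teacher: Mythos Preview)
The paper does not supply its own proof of this lemma; it simply cites \cite[Lemma 2]{smidWeakGapProperty2009}. Your argument---iterate the doubling property $k$ times and choose $k$ so that the covering balls have diameter below $r$---is exactly the standard proof found in that reference, and it is correct. The only minor caveat is the degenerate range $r>2R$: there your minimality step $2^{k-1}\le 2R/r$ can fail, but then the ambient ball already has diameter $<r$, so $\abs{S}\le 1$ and the conclusion is immediate anyway.
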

As for computation, the bounds for the computation of the greedy
permutation~\cite{clarksonFastAlgorithmsAll1983,har-peledFastConstructionNets2006a}
and friends~\cite{cavannaGeometricPerspectiveSparse2015} also apply to finite metric spaces. Also, in the Rips case the
intersection of balls is only computed pairwise, so for simplicity one can take
the radius function $\rho_{x}$ for each $x\in X$ to be
\begin{equation*}
  \rho_{x}(r) = \begin{cases}
    r, & r < \slow(x),\\
    U_{x}(r) = \frac{1}{3(1+\epsilon)} r +  \frac{3\epsilon + 2}{3(1+\epsilon)}\slow(x), & r \geq \slow(x).
  \end{cases}
\end{equation*}

\appendix

\section{LP-type problems and enclosing balls}\label{sec:lp_type}
In this appendix we explain how to solve the problem~\eqref{eq:problem_min}
of~\cref{sec:intersection_balls} as an LP-type
problem~\cite{matousekSubexponentialBoundLinear1996} in expected time
$O(d^{3}2^{2d}m)$, following the strategy of the smallest enclosing ball of
balls~\cite{fischerSmallestEnclosingBall2004}. In fact, as already mentioned,
the problem~\eqref{eq:problem_min} can be written as the \textit{smallest
  enclosing ball for a point set with strictly convex sets} of
Z{\"u}rcher~\cite{zurcherSmallestEnclosingBall2007}, which itself is an LP-type
problem and an extension of Fischer and G{\"a}rtner's
solution~\cite{fischerSmallestEnclosingBall2004}. Here we briefly recall LP-type
problems and prove the missing lemmas, all similar to the lemmas of
Fischer and G{\"a}rtner~\cite{fischerSmallestEnclosingBall2004}, needed to solve~\eqref{eq:problem_min}.

A \deff{LP-type problem} is a
pair $(H, w)$ given by a finite set of constraints $H$ and a function
$w\colon 2^{H}\to\R\cup\Set{-\infty}$, understood as mapping every subset of $H$
to its optimal solution, that satisfies the following for all subsets
$F\subset G\subset H$
\begin{enumerate}
  \item \textit{monotonicity:} $w(F)\leq w(G)$, and
  \item \textit{locality:} if $w(F) < w(G)$ there is an $h\in G$ such that
        $w(F) < w(F \cup\Set{h})$.
\end{enumerate}

A \deff{basis $F$ of $G$} is an inclusion-minimal nonempty subset $F\subset G$
such that $w(F) = w(G)$. A subset $F\subset H$ is a \deff{basis} if it is a
basis of itself.

The problem is solved by finding a basis of $H$, the set of all constraints. The MSW
algorithm~\cite{matousekSubexponentialBoundLinear1996}, viewed here as a black
box, does so given the following primitives:
\begin{enumerate}
  \item \textit{violation test:} given a constraint $h$ and a basis $F$, test
        whether $w(F) < w(F\cup\Set{h})$, and
  \item \textit{basis computation:} given a constraint $h$ and a basis $F$,
        compute a basis of $F\cup\Set{h}$.
\end{enumerate}
The expected number of violation tests and basis computations performed by the
MSW algorithm is $O(2^{\delta}\abs{H})$~\cite{matousekSubexponentialBoundLinear1996},
where $\delta$ is the \deff{combinatorial dimension} of the problem: the size of
the largest basis.

To view~\eqref{eq:problem_min} as an LP-type problem, for a point $z\in \R^{d}$
and a constraint given by the triple $(p_{i}, \alpha_{i}, \beta_{i})$ where, as above,
$p_{i}\in\R^{d}$, $\alpha_{i}$ is a positive real number, and $\beta_{i}$ is a
non-negative real number, we define the \deff{reach} of $z$ with respect to the
constraint by
\begin{equation*}
  \reach(z, (p_{i}, \alpha_{i}, \beta_{i})) \coloneqq \frac{1}{\alpha_{i}}(\norm{p_{i} - z}^{2} - \beta_{i}).
\end{equation*}
For a subset $F\subset H$ of the constraints $H$ we write
\begin{equation*}
  \reach_{F}(z) \coloneqq \max_{(p_{i}, \alpha_{i}, \beta_{i})\in F} \reach(z, (p_{i}, \alpha_{i}, \beta_{i})).
\end{equation*}
We define
$w\colon 2^{H}\to\R\cup\Set{-\infty}$ by taking a subset of the constraints $F\subset H$ to
\begin{equation}\label{eq:def_w}
  w(F) \coloneqq \min_{z\in\R^{d}}\Gamma_{F}(z),
\end{equation}
with the convention that $w(\varnothing) = -\infty$. The solution
of~\eqref{eq:problem_min} is given by $w(H)$ and the associated minimizing
$z\in\R^{d}$.

If we would be computing the minimum enclosing ball (that is, with
$\alpha_{i} = 1$ and $\beta_{i} = 0$) then the $z\in\R^{d}$ that minimizes
$w(H)$ is the center of such a ball. In general, if $z\in\R^{d}$ is one of the
points that minimizes $w(F)$ in \cref{eq:def_w} above, we say that $z$ is a
\textit{center} of $F$. Similarly, we say
that a point $z\in \R^{d}$ is \deff{tangent} to $F\subset H$ if
$\Gamma(z, h) = \Gamma_{F}(z)$ for all constraints $h\in F$.

For each constraint $h\coloneqq (p_{i}, \alpha_{i}, \beta_{i})$, the reach
$\reach(z, h)$ is \deff{strictly convex} as a function of $z$, meaning
$\reach(t z_{1} + (1-t) z_{2}, h) < t \reach(z_{1}, h) + (1 - t) \reach(z_{2}, h)$
when $t\in (0, 1)$ and $z_{1}\neq z_{2}$. This follows from $\norm{x}^{2}$ being
strictly convex. In turn, this allows to write~\eqref{eq:problem_min} as the,
more general, \textit{smallest enclosing ball for a point set with strictly
  convex sets} of Zürcher~\cite{zurcherSmallestEnclosingBall2007}. The following
two lemmas, and that $w$ is in fact a LP-type problem, then follow
from~\cite{zurcherSmallestEnclosingBall2007}; for completeness, we give, shorter
because our case is more specific, proofs.

\begin{lemma}\label{lem:lp_center_unique}
  The center of a subset of the constraints $F\subset H$ exists and is unique.
\end{lemma}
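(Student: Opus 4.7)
The plan is to handle existence and uniqueness separately: existence comes from a standard compactness/coercivity argument, and uniqueness comes from strict convexity of $\Gamma_{F}$, which is inherited from the strict convexity of each reach function noted in the excerpt.

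For existence, I would first note that $\Gamma_{F}\colon\R^{d}\to\R$ is continuous, being the pointwise maximum of finitely many continuous functions $z\mapsto \reach(z, h)$, each of which is a shifted and rescaled squared norm. Moreover, $\Gamma_{F}$ is coercive: picking any $h = (p_{i}, \alpha_{i}, \beta_{i})\in F$, one has $\reach(z, h) = \frac{1}{\alpha_{i}}(\norm{p_{i} - z}^{2} - \beta_{i}) \to +\infty$ as $\norm{z}\to\infty$, since $\alpha_{i} > 0$, and $\Gamma_{F}(z)\geq \reach(z, h)$. Therefore the sublevel sets of $\Gamma_{F}$ are compact, and $\Gamma_{F}$ attains its infimum; that is, a center exists.

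For uniqueness, I would use that the pointwise maximum of strictly convex functions is strictly convex. The excerpt already observes that each $\reach(\cdot, h)$ is strictly convex as a function of $z$. Given $z_{1}\neq z_{2}$ and $t\in(0,1)$, for every $h\in F$ we have
\begin{equation*}
  \reach(t z_{1} + (1-t)z_{2}, h) < t\reach(z_{1}, h) + (1-t)\reach(z_{2}, h) \leq t\Gamma_{F}(z_{1}) + (1-t)\Gamma_{F}(z_{2}),
\end{equation*}
and taking the maximum over $h\in F$ on the left-hand side preserves the strict inequality, giving $\Gamma_{F}(tz_{1} + (1-t)z_{2}) < t\Gamma_{F}(z_{1}) + (1-t)\Gamma_{F}(z_{2})$. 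So $\Gamma_{F}$ is strictly convex, and a strictly convex function on $\R^{d}$ has at most one minimizer: if both $z_{1}$ and $z_{2}$ minimized $\Gamma_{F}$, the midpoint would give a strictly smaller value, a contradiction.

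There is no real obstacle here: the only thing to be slightly careful about is the degenerate case $F = \varnothing$, which is excluded by the convention $w(\varnothing) = -\infty$ (a center is only defined for nonempty $F$), and the fact that the strict inequality survives under the $\max$, which is immediate from the bound above holding uniformly in $h$. Combining the two parts yields existence and uniqueness of the center.
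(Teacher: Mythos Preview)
Your proof is correct and follows essentially the same approach as the paper: existence via a compactness/coercivity argument and uniqueness via strict convexity of $\Gamma_{F}$ together with the midpoint contradiction. The paper's version is terser, merely invoking ``a standard compactness argument'' and ``a standard strict convexity argument,'' whereas you spell out the coercivity and the preservation of strict convexity under the maximum; but the underlying ideas are identical.
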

\begin{proof}
  Existence follows from a standard compactness argument.
  Uniqueness is a standard strict convexity argument:
   $\Gamma_{F}(z)$ is strictly convex, because it is the maximum of
  strictly convex functions. Now, if there are two different centers, $z_{1}$ and $z_{2}$,
  we can take $z'\coloneqq \frac{1}{2}z_{1} + \frac{1}{2}z_{2}$ and conclude that
  $\Gamma_{F}(z') < \Gamma_{F}(z_{1})$, by the strict convexity of
  $\Gamma_{F}(z)$. This contradicts the minimality of $z_{1}$ and $z_{2}$.
\end{proof}

\begin{lemma}\label{lem:lp_equality}
  The center $z_{F}$ of a basis $F$ is tangent to $F$.
\end{lemma}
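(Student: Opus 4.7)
The plan is to argue by contradiction. Suppose $z_F$ is not tangent to $F$, so some $h \in F$ satisfies $\Gamma(z_F, h) < \Gamma_F(z_F)$. I will produce a proper nonempty subset $F' \subsetneq F$ with $w(F') = w(F)$, contradicting the inclusion-minimality of the basis $F$. The case $\abs{F} = 1$ is trivial since then $\Gamma_F(z_F) = \Gamma(z_F, h)$ for the unique $h \in F$, so I may assume $\abs{F} \geq 2$ and take $F' \coloneqq F \setminus \Set{h}$.

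Because $\Gamma(z_F, h)$ is strictly less than $\Gamma_F(z_F)$, the maximum defining $\Gamma_F(z_F)$ is necessarily attained on some constraint in $F'$, so $\Gamma_{F'}(z_F) = \Gamma_F(z_F) = w(F)$. One direction, $w(F') \leq w(F)$, then follows immediately (either from this equality or from monotonicity). The substantive task is the reverse inequality $w(F') \geq w(F)$.

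For this, I would appeal to continuity and convexity. The strict inequality $\Gamma(z_F, h) < \Gamma_{F'}(z_F)$ persists in some open neighborhood $U$ of $z_F$, where therefore $\Gamma_F$ and $\Gamma_{F'}$ agree pointwise. Since $z_F$ is the global minimum of $\Gamma_F$, it is in particular a local minimum of $\Gamma_{F'}$ on $U$. The function $\Gamma_{F'}$ is convex, being a maximum of (strictly) convex functions, so any local minimum is also global. Hence $w(F') = \Gamma_{F'}(z_F) = w(F)$, yielding the desired contradiction. The main subtlety I anticipate is precisely this reverse inequality: monotonicity alone does not deliver it, and one must pass through the neighborhood $U$ to upgrade the global minimum property of $z_F$ for $\Gamma_F$ into that for $\Gamma_{F'}$ via convexity.
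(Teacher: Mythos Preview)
Your proof is correct and shares the paper's high-level strategy: assume some constraint is slack at $z_F$, remove it, and derive $w(F') = w(F)$ to contradict inclusion-minimality. The technical execution differs. You remove a single slack constraint $h$, observe by continuity that $\Gamma_F$ and $\Gamma_{F'}$ coincide on a neighborhood of $z_F$, and then invoke the fact that a local minimum of a convex function is global to conclude $z_F$ is also the center of $F'$. The paper instead removes the entire set $G$ of slack constraints, takes the center $y$ of $F\setminus G$, and moves along the segment $[z_F,y]$: strict convexity forces $\Gamma_{F\setminus G}$ to drop below $w(F)$ along the whole open segment, while for the slack constraints $g\in G$ continuity at $z_F$ keeps $\Gamma(\cdot,g)$ below $w(F)$ for $t$ close enough to $1$, yielding a point with $\Gamma_F$-value strictly below $w(F)$ and contradicting minimality of $z_F$ directly. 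Your neighborhood-plus-convexity argument is a bit cleaner and only needs ordinary convexity of $\Gamma_{F'}$; the paper's segment argument is more explicit and leans on strict convexity.
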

\begin{proof}
  Let $z_{F}$ be the center of $F$ and let $G\subset F$ be the subset of
  constraints $g\in F$ with $\Gamma(z_{F}, g) < \Gamma_{F}(z_{F})$. We argue by
  contradiction that $G$ is empty. Let $y$ be the center of $F\setminus G$. If
  $G$ is non-empty, we have that $y\neq z_{F}$, because otherwise
  $w(F\setminus G) = w(F)$, contradicting that $F$ is a basis.

  Because $\Gamma_{F}(z)$ is a
  strictly convex function, for every $t\in(0,1)$ we have
  \begin{equation*}
    \Gamma_{F\setminus G}(t z_{F} + (1-t)y) < t\Gamma_{F\setminus G}(z_{F}) + (1-t)\Gamma_{F\setminus G}(y) \leq w(F).
  \end{equation*}
  Moreover, for all $g\in G$,
  $\Gamma(t z_{F} + (1-t)y, g) < t\Gamma(z_{F}, g) + (1-t)\Gamma(y, g)$, and
  thus there exists $t'\in (0,1)$ such that
  $\Gamma(t' z_{F} + (1-t')y, g) < \Gamma_{F}(z_{F})$, because $\Gamma(z_{F}, g) < \Gamma_{F}(z_{F})$.
  All in all, it follows that $\Gamma_{F}(t' z_{F} + (1-t')y) < \Gamma_{F}(z_{F})$,
  contradicting that $z_{F}$ is the center.
\end{proof}

We now compute the center of a basis, with the aim of implementing the
violation test and the basis computation primitives of the MSW algorithm. For that, we need the following.

\begin{lemma}\label{lem:point_unique}
  Let $F$ be a basis. The points in $F$ are
  unique, meaning there are no two different
  $(p_{i}, \alpha_{i}, \beta_{i}), (p_{j},\alpha_{j},\beta_{j})\in F$ with
  $p_{i} = p_{j}$.
\end{lemma}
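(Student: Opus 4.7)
The plan is to argue by contradiction: assume two distinct constraints $h_i = (p, \alpha_i, \beta_i)$ and $h_j = (p, \alpha_j, \beta_j)$ of $F$ share the same point $p$, and show that $w(F \setminus \{h_j\}) = w(F)$, which contradicts the inclusion-minimality of the basis $F$.

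The first step is to exploit tangency (\cref{lem:lp_equality}) at the center $z_F$ of $F$, which gives
\begin{equation*}
  \reach(z_F, h_i) = \reach(z_F, h_j) = w(F),
\end{equation*}
or equivalently
\begin{equation*}
  \|p - z_F\|^2 \;=\; \alpha_i\, w(F) + \beta_i \;=\; \alpha_j\, w(F) + \beta_j \;=:\; R.
\end{equation*}
This equality already handles the degenerate case $\alpha_i = \alpha_j$: it forces $\beta_i = \beta_j$, so $h_i$ and $h_j$ would coincide as triples, contradicting their distinctness as elements of the set $F$.

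The second step is the core inequality argument. Suppose for contradiction that $w(F \setminus \{h_j\}) < w(F)$ and let $z''$ denote the (unique, by \cref{lem:lp_center_unique}) center of $F \setminus \{h_j\}$. On one hand, since $h_i \in F \setminus \{h_j\}$,
\begin{equation*}
  \reach(z'', h_i) \leq w(F\setminus\{h_j\}) < w(F),
\end{equation*}
which rearranges to $\|p - z''\|^2 < \alpha_i w(F) + \beta_i = R$. On the other hand, $z''$ cannot beat the optimum on $F$, so $\Gamma_F(z'') \geq w(F)$; together with $w(F\setminus\{h_j\}) < w(F)$ this forces $\reach(z'', h_j) \geq w(F)$, which rearranges to $\|p - z''\|^2 \geq \alpha_j w(F) + \beta_j = R$. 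These two inequalities on $\|p - z''\|^2$ contradict each other, so $w(F\setminus\{h_j\}) = w(F)$, contradicting that $F$ is a basis.

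There is no real obstacle here: the entire argument rests on the fact that when two constraints involve the same point $p$, the reach functions $\reach(\cdot, h_i)$ and $\reach(\cdot, h_j)$ depend on $z$ only through $\|p - z\|^2$, and tangency at $z_F$ pins this quantity to a single value $R$ that decouples the two constraints from $z$ entirely. The only subtlety to keep in mind is the convention about what ``basis'' means, namely that inclusion-minimality translates into strict inequality $w(F') < w(F)$ for every proper nonempty subset $F' \subsetneq F$, which is precisely what gets violated.
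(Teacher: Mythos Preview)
Your proof is correct and follows essentially the same route as the paper: both remove one of the two constraints sharing the point $p$, look at the center of the reduced set, and use that the reach at this shared point depends on $z$ only through $\|p-z\|^2$ (together with tangency at $z_F$, \cref{lem:lp_equality}) to obtain a contradiction with $F$ being a basis. Your presentation is a bit more explicit---you derive the two contradictory bounds $\|p-z''\|^2 < R$ and $\|p-z''\|^2 \geq R$ directly---whereas the paper compresses the same reasoning into the single line $\|p-y\|^2 < \|p-z_F\|^2$ and then contradicts minimality of $z_F$.
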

\begin{proof}
  Suppose, on the contrary, that there are two different
  $h \coloneqq (p_{i}, \alpha_{i}, \beta_{i})$ and
  $h'\coloneqq (p_{j}, \alpha_{j}, \beta_{j})$ in $F$ with $p = p_{i} = p_{j}$.
  Let $z_{F}$ and $y$ be the centers of $F$ and $F\setminus \Set{h}$,
  respectively. We have that $\Gamma_{F\setminus\Set{h}}(y) < \Gamma_{F}(z_{F})$
  and $y\neq z_{F}$, because otherwise $F$ cannot be a basis. But then it
  follows that $\norm{p - y}^{2} < \norm{p - z_{F}}^{2}$, which implies that
  $z_{F}$ is not tangent to $F$ and thus contradicts that $z_{F}$ is the center
  of $F$, by~\cref{lem:lp_equality}.
\end{proof}

The proof of the following lemma follows by an adaptation of the analogous
result for the smallest enclosing ball of balls~\cite[Lemma
5]{fischerSmallestEnclosingBall2004}, as shown in~\cite[Lemma
5.3]{zurcherSmallestEnclosingBall2007}.

\begin{lemma}\label{lem:basis_characterization}
  The center $z_{F}$ of a basis $F\subset H$ is in the convex hull of the points
  $\Set{p \given (p, \alpha, \beta)\in F}$ of $F$, and these are affinely
  independent.

  If a point $z_{F}$ is tangent to $F\subset H$ and is in the convex hull of
  the points $\Set{p\given (p, \alpha, \beta)\in F}$ of $F$, then $z_{F}$ is the
  center of $F$.
\end{lemma}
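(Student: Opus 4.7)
My plan is to establish both directions of the lemma via a first-order optimality characterization of the center. The key observation is that for each constraint $h = (p, \alpha, \beta)$, the reach $\reach(\cdot, h)$ is smooth and strictly convex in $z$ with gradient $\nabla_z \reach(z, h) = \frac{2}{\alpha}(z - p)$, and $\reach_F$ is the pointwise maximum of such functions. By Danskin's theorem (equivalently, the subdifferential formula for a pointwise maximum of smooth convex functions), $z$ minimizes $\reach_F$---and hence is the unique center of $F$ by~\cref{lem:lp_center_unique}---if and only if $0$ lies in the convex hull of the gradients $\{\nabla_z \reach(z, h) : h \in F,\ h\text{ active at }z\}$.

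Specializing to $z$ tangent to $F$, every constraint in $F$ is active at $z$, so the optimality condition becomes $0 \in \conv\{\frac{2}{\alpha_i}(z - p_i) : i \in F\}$. A direct algebraic rearrangement---renormalizing the convex coefficients by $1/\alpha_i$---shows that this is equivalent to writing $z = \sum_i \mu_i p_i$ with $\mu_i \geq 0$ and $\sum_i \mu_i = 1$, i.e., $z$ lying in the convex hull of the points of $F$. This yields the converse direction of the lemma immediately. For the first direction, the center $z_F$ of a basis $F$ is tangent to $F$ by~\cref{lem:lp_equality}, so the same characterization places $z_F$ in the convex hull of the points of $F$.

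It remains to show that these points are affinely independent, which I would argue by contradiction from the minimality of the basis. If the points $\{p_i\}_{i \in F}$ were affinely dependent, I could perturb the convex combination $z_F = \sum_i \mu_i p_i$ along a non-trivial affine dependence until some coefficient first hits zero (or simply use a coefficient that is already zero), producing a proper non-empty subset $F' \subsetneq F$ with $z_F \in \conv\{p_i : i \in F'\}$. Tangency to $F$ is inherited by any non-empty $F' \subset F$ since all reach values coincide, so $z_F$ remains tangent to $F'$; by the converse direction already proved, $z_F$ is then the center of $F'$, giving $w(F') = w(F)$ and contradicting the minimality of $F$ as a basis.

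The main obstacle I expect is the careful invocation of the first-order optimality condition for the pointwise maximum of smooth convex functions; once the subdifferential/Danskin formulation is in place, the algebraic equivalence with the convex-hull condition and the perturbation argument for affine independence are routine, and the strict convexity already exploited in~\cref{lem:lp_center_unique} keeps the uniqueness statements transparent.
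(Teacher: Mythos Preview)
Your proposal is correct. The paper does not actually give a proof of this lemma; it only cites Fischer--G\"artner \cite{fischerSmallestEnclosingBall2004} and Z\"urcher \cite{zurcherSmallestEnclosingBall2007} and asserts that the result follows by adapting their arguments. Your route via the subdifferential of a pointwise maximum of smooth strictly convex functions (Danskin's theorem), the renormalization $\mu_i \propto \lambda_i/\alpha_i$ to pass between the optimality condition $0 \in \conv\{\tfrac{2}{\alpha_i}(z-p_i)\}$ and the convex-hull membership $z \in \conv\{p_i\}$, and the Carath\'eodory-style perturbation along an affine dependence to force a proper sub-basis, is exactly the standard convex-analysis argument underlying those references, so you are essentially reconstructing what the paper outsources.
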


The following is similar to~\cite[Lemma 8]{fischerSmallestEnclosingBall2004}.

\begin{lemma}\label{lem:center_computation}
  The center $z_{F}$ of a basis $F\subset H$ can be computed in $O(d^{3})$-time.
\end{lemma}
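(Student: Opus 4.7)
The plan is to set up a tangency system coming from \cref{lem:lp_equality,lem:basis_characterization}, reduce it to a single quadratic scalar equation, and solve it with standard $O(d^3)$ numerical linear algebra.

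First, I would parametrize $z_F$ using the basis. By \cref{lem:basis_characterization}, the center lies in the affine hull of the $k+1$ affinely independent points of $F$, with $k\leq d$. Relabel the constraints as $(p_0,\alpha_0,\beta_0),\dots,(p_k,\alpha_k,\beta_k)$, form the $d\times k$ matrix $A$ whose columns are $p_i-p_0$, and write $z_F = p_0 + A\lambda$ for some $\lambda\in\R^k$. This parametrization automatically enforces the affine-hull membership, leaving $k+1$ scalar unknowns: $\lambda$ and the common value $s$ with $\|p_i-z_F\|^2 = \alpha_i s + \beta_i$ for every $i$ (the tangency from \cref{lem:lp_equality}).

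Second, I would reduce the tangency system to one variable. Expanding each equation via $z_F = p_0 + A\lambda$ and subtracting the $i=0$ equation from each other equation kills the quadratic term $\lambda^T A^T A \lambda$ and yields, with $M\coloneqq A^TA$ (the Gram matrix) and the vectors $a\coloneqq (\alpha_i-\alpha_0)_{i=1}^k$, $b\coloneqq(\beta_i-\beta_0)_{i=1}^k$, the linear system
\begin{equation*}
  2 M \lambda = \mathrm{diag}(M) - s\,a - b.
\end{equation*}
Since the chosen points are affinely independent, $M$ is symmetric positive definite and hence invertible, so $\lambda$ is an affine function of $s$: $\lambda = \mu_1 - s\mu_2$ with $\mu_1, \mu_2\in\R^k$ obtained from two Cholesky solves against $M$. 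Plugging back into the remaining (un-subtracted) equation $\lambda^T M \lambda = \alpha_0 s + \beta_0$ turns it into an explicit quadratic in $s$, which I solve in closed form and then recover $\lambda$ and $z_F = p_0 + A\lambda$.

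For the complexity: assembling $M$ costs $O(k^2 d)$, the Cholesky factorization and the two triangular solves cost $O(k^3)$, evaluating the quadratic coefficients and forming $z_F$ costs $O(kd)$. Since $k\leq d$, the total is $O(d^3)$. The main subtlety I expect is selecting the correct root of the quadratic: uniqueness of the center (\cref{lem:lp_center_unique}) guarantees one is right, and it can be identified as the unique root whose associated $\lambda$ (together with $\lambda_0\coloneqq 1-\sum_{i\geq 1}\lambda_i$) lies in the simplex, as demanded by the convex-hull part of \cref{lem:basis_characterization}; the other root, if real and distinct, corresponds to a spurious tangent configuration outside the convex hull of the $p_i$. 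Aside from this bookkeeping, the argument is a direct transcription of the Fischer–Gärtner style basis computation~\cite{fischerSmallestEnclosingBall2004} adapted to the linear-in-$s$ structure of~\eqref{eq:problem_min}.
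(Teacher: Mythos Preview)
Your proposal is correct and is essentially the same argument as the paper's: translate to a base point, subtract the base-point tangency equation from the others to obtain a linear system $2Q^{T}Q\,\lambda = E + Ds$ in the Gram matrix, invert using affine independence to get $\lambda$ affine in $s$, and substitute back into the base-point equation to obtain a quadratic in $s$. Your additional remarks on Cholesky and on selecting the correct root via the simplex condition are consistent with (and slightly more explicit than) the paper's ``only one of the solutions obtained is valid.''
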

\begin{proof}
  Setting $s\coloneqq w(F)$, for each of the $k$ constraints
  $h_{1}, \dots, h_{k}$ in $F$, with $h_{i} = (p_{i}, \alpha_{i}, \beta_{i})$,
  we have $\norm{p_{i} - z_{F}}^{2} = \alpha_{i} s + \beta_{i}$,
  by~\cref{lem:lp_equality}. Moreover, by~\cref{lem:point_unique}, all the
  $p_{1}, \dots, p_{k}$ are unique.

  Now, under the change of variables $c\coloneqq z_{F} - p_{1}$ and
  $q_{i} \coloneqq p_{i} - p_{1}$, we write
  \begin{align}\label{eq:quadratic}
    c^{T}c & = \alpha_{1} s + \beta_{1},\\
    (q_{i} - c)^{T} (q_{i} - c) &= \alpha_{i} s + \beta_{i}, \text{ for $i = 2,\dots, k$}.\nonumber
  \end{align}
  Subtracting the second from the first, we obtain
  \begin{equation*}
    2q_{i}^{T}c = q_{i}^{T}q_{i} + (\alpha_{1} - \alpha_{i})s + (\beta_{1} - \beta_{i}), \text{ for $i = 2,\dots, k$}.
  \end{equation*}
  Using the fact that $z_{F}$ is in the convex hull of the $p_{1}, \dots, p_{k}$,
  and therefore can be written as $z_{F} = \sum_{j=1}^{k}\lambda_{j}p_{j}$ with
  $\sum_{j=1}^{k}\lambda_{j} = 1$, we can write
  \begin{equation*}
    c = \sum_{j=1}^{k}\lambda_{j} p_{j} - \sum_{j=1}^{k}\lambda_{j}p_{1} = \sum_{j=2}^{k}\lambda_{j}q_{j}.
  \end{equation*}
  Letting $Q = (q_{2}, \dots, q_{k})$ and
  $\lambda = (\lambda_{2}, \dots, \lambda_{k})$, we obtain the following linear
  equations
  \begin{equation*}
    2 q_{i}^{T} Q\lambda = q_{i}^{T}q_{i} + (\alpha_{1} - \alpha_{i})s + (\beta_{1} - \beta_{i}), \text{ for $i = 2, \dots, k$},
  \end{equation*}
  that assemble into the linear system $2Q^{T}Q\lambda = E + Ds$. Since the
  points $p_{1}, \dots, p_{k}$ are affinely independent,
  by~\cref{lem:basis_characterization}, the matrix $Q^{T}Q$ is nonsingular, and
  we can obtain a parametrization of $\lambda$ in terms of $s$.

  Finally, we substitute back in the quadratic equations of~\cref{eq:quadratic}.
  Only one of the solutions obtained is valid.
\end{proof}

To do the violation test of a constraint $h$ and a basis $F$, we compute the
center $z_{F}$ of $F$ and check whether $w(F) < \reach(z_{F}, h)$. To do the basis
computation, given a constraint $h$ and a basis $G$, we do as in the smallest
enclosing ball of balls~\cite{fischerSmallestEnclosingBall2004}: we iterate over
all subsets $F \subset G\cup\Set{h}$ with $h\in F$, in increasing order of size.
We check if each $F$ is a basis by doing the computations
of~\cref{lem:center_computation}, which, if applicable and successful, yield a
point $z_{F}$ that is tangent to $F$ and in the convex hull of the points of
$F$; by~\cref{lem:basis_characterization} such an $F$ is a basis. If, in
addition, we have that $\Gamma(z_{F}, g) \leq w(F)$ for every
$g\in G\cup\Set{h}$ then $F$ is a basis of $G\cup\Set{h}$, by the chosen enumeration order.

All in all, because the combinatorial dimension (the size of the largest basis)
is $d+1$, by~\cref{lem:basis_characterization}, the MSW algorithm solves the
problem in time $O(d^{3}2^{2d}m)$, as claimed.

\bibliographystyle{plainurl_fulljournal}
\bibliography{refs}

\end{document}